\journal{}
\newtheorem{theorem}{Theorem}[section]
\newtheorem{prop}[theorem]{Proposition}
\newtheorem{lemma}[theorem]{Lemma}
\newtheorem{remark}[theorem]{Remark}
\newtheorem{define}[theorem]{Definition}
\newtheorem{example}[theorem]{Example}
\def\Q{{\mathbb{Q}}}
\def\k{{\rm K}}
\def\lpp{{\rm lpp}}
\def\lc{{\rm lc}}
\def\f{{\bf f}}
\def\e{{\bf e}}
\def\u{{\bf u}}
\def\v{{\bf v}}
\def\w{{\bf w}}
\def\fu{{f^{[\u]}}}
\def\gv{{g^{[\v]}}}
\def\hw{{h^{[\w]}}}
\def\bru{\bar{\u}}
\def\brv{\bar{\v}}
\def\brf{\bar{f}}
\def\brg{\bar{g}}
\def\brfu{{\brf^{[\bru]}}}
\def\brgv{{\brg^{[\brv]}}}
\def\lcm{{\rm lcm}}
\def\deg{{\rm deg}}
\def\max{{\rm max}}
\def\x{{x_1,\cdots,x_n}}
\def\lla{{\longleftarrow}}
\newcommand{\vh}{\vspace*{8pt}}
\newcommand{\spc}{\hspace*{15pt}}
\newcommand{\comment}[1]{}
\newcommand{\ignore}[1]{}
\newcommand{\gr}{Gr\"obner\,}
\begin{document}

\begin{frontmatter}

%% Title, authors and addresses

%% use the tnoteref command within \title for footnotes;
%% use the tnotetext command for the associated footnote;
%% use the fnref command within \author or \address for footnotes;
%% use the fntext command for the associated footnote;
%% use the corref command within \author for corresponding author footnotes;
%% use the cortext command for the associated footnote;
%% use the ead command for the email address,
%% and the form \ead[url] for the home page:
%%
%% \title{Title\tnoteref{label1}}
%% \tnotetext[label1]{}
%% \author{Name\corref{cor1}\fnref{label2}}
%% \ead{email address}
%% \ead[url]{home page}
%% \fntext[label2]{}
%% \cortext[cor1]{}
%% \address{Address\fnref{label3}}
%% \fntext[label3]{}

\title{A Generalized Criterion for Signature-based Algorithms to Compute Gr\"obner Bases \tnoteref{labeltitle}}

%% use optional labels to link authors explicitly to addresses:
%% \author[label1,label2]{<author name>}
%% \address[label1]{<address>}
%% \address[label2]{<address>}

\author{Yao Sun, Dingkang Wang \label{label3}}

\tnotetext[labeltitle]
{This paper is a substantially expanded version of the paper entitled ``A Generalized Criterion for  Signature Related \gr Basis Algorithms'', which was presented at ISSAC 2011 \citep{SunWang11}.}

\fntext[label3]{The authors are supported by NKBRPC 2011CB302400, NSFC 10971217 and 60821002/F02.}

\address{Key Laboratory of Mathematics Mechanization\\ Academy of Mathematics and Systems Science, Chinese Academy of Sciences, Beijing 100190, China}

\ead{sunyao@amss.ac.cn, dwang@mmrc.iss.ac.cn}

\begin{abstract}
%% Text of abstract
A generalized criterion for signature-based algorithms to compute \gr bases is proposed in this paper. This criterion is named by ``generalized criterion", because it can be specialized to almost all existing criteria for signature-based algorithms which include the famous F5 algorithm, F5C, extended F5, G$^2$V and the GVW algorithm. The main purpose of current paper is to study in theory which kind of criteria is correct in signature-based algorithms and provide a generalized method to develop new criteria. For this purpose, by studying some key facts and observations of signature-based algorithms, a generalized criterion is proposed. The generalized criterion only relies on a partial order defined on a set of polynomials. When specializing the partial order to appropriate specific orders, the generalized criterion can specialize to almost all existing criteria of signature-based algorithms. For {\em admissible} partial orders, a proof is  presented for the correctness of the algorithm that is based on this generalized criterion. And the partial orders implied by the criteria of F5 and GVW are also shown to be admissible. More importantly, the generalized criterion provides an effective method to check whether a new criterion is correct as well as to develop new criteria for signature-based algorithms.
\end{abstract}

\begin{keyword}
%% keywords here, in the form: keyword \sep keyword
\gr basis, F5, signature-based algorithm, criterion.
%% MSC codes here, in the form: \MSC code \sep code
%% or \MSC[2008] code \sep code (2000 is the default)

\end{keyword}

\end{frontmatter}

% \linenumbers

%% main text

\smallskip

\section{Introduction}

\gr basis was first proposed by Buchberger in 1965. Since then, many important improvements have been made to speed up the algorithms for computing \gr basis \citep{Buchberger79, Buchberger85, GebMol86, Gio91, Mora92, Fau99, Fau02}. One important improvement is that Lazard pointed out the connection between a \gr basis and linear algebra \citep{Lazard83}. This idea is also implemented as XL type algorithms by Courtois et al. \citep{Courtois00} and Ding et al. \citep{Ding08}. Up to now, F5 is one of the most efficient algorithms for computing \gr basis. The notion of ``signatures" for polynomials was also introduced by Faug\`ere in \citep{Fau02}. Since F5 was proposed in 2002, it has been widely investigated and several variants of F5  have   been presented, including the F5C algorithm \citep{Eder09} and F5 with extended criteria \citep{Ars09}. Proofs and other extensions of F5 are also investigated in \citep{Stegers05, Eder08, Albrecht10, Arri10, SunWang10a, SunWang10b, Zobnin10}. Recently, Gao et al. proposed an incremental signature-based algorithm G$^2$V to compute \gr basis in \citep{Gao09}, and presented an extended version GVW in \citep{Gao10b}. The framework of signature-based algorithms was studied in \citep{Eder11}.

The common characteristics of F5, F5C, extended F5, G$^2$V and GVW are (1) each polynomial has been assigned a {\em signature}, and (2) both the criteria and the  reduction process  depend on the signatures of polynomials. So all these algorithms are signature-based algorithms. The only difference among the algorithms is that their criteria are different.

By studying these criteria carefully, we find a key fact in signature-based algorithms, and then some observations are motivated. One key observation is that if two polynomials have the same signature, then at most one of them is necessary to be reduced. The reason is that reducing two polynomials that have the same signature, could create the same leading power product if some extra conditions hold. With this insight, we use a partial order to help choose one polynomial that is not to be reduced. Then a generalized criterion for signature-based algorithms is proposed based on this partial order. By using appropriate partial orders, the generalized criterion can be specialized to almost all existing criteria of signature-based algorithms.

Unfortunately, not all partial orders can make the generalized criterion correct. We proved   that the generalized criterion is correct if the partial order is {\em admissible}.
%The proof in this paper does not need extra assumptions on the computing order of critical pairs. The proof is also not limited to homogeneous polynomial systems. At present, some proofs for signature-based algorithms always assume the input polynomial system is homogeneous or the critical pair with the smallest signature is computed first. However, in practical implementation, these extra assumptions usually make the algorithm less efficient.
Moveover, we show that the partial orders implied by F5 and GVW's criteria are both admissible, so the proof in this paper is also valid for the correctness of F5 and GVW.

The significance of the generalized criterion is to show which kind of criteria is correct for signature-based algorithms and provide a generalized method to check or even develop new criteria. Specifically, when a new criterion is presented, if it can be specified from the generalized criterion by using an admissible partial order, then this new criterion is definitely correct. It is also possible for us to develop some new criteria by using new admissible partial orders in the generalized criterion. From the proof in this paper, we know that any admissible partial order can develop a new criterion for signature-based algorithms in theory, but not all of these criteria can reject as many critical pairs as possible. Therefore, we believe that if the admissible partial order is in fact a total order, then almost all useless computations can be avoided. The proof for the claim will be included in our future works.

The paper is organized as follows. We present our main ideas of the generalized criterion in Section \ref{sec_mainideas}. Section \ref{sec_criterion} gives the generalized criterion and shows how this generalized criterion is used. Section \ref{sec_specializations} details how the generalized criterion specializes to F5 and GVW's criteria. We prove the correctness of the generalized criterion in Section \ref{sec_proof}. A new criterion is developed in Section \ref{sec_newcri}. Concluding remarks follow in Section \ref{sec_conclusion}.

%%%%%%%%%%%%%%%%%%%%%%%%%%%%%%%%%%%%%%%%%%%%%%%6.18m%%%%%%%%%%%%%%%%%%%%%%%%%%%%%%%%%%%%%%%%%%%%%%%

%%%%%%%%%%%%%%%%%%%%%%%%%%%%%%%%%%%%%%%%%%%%%%%6.20s%%%%%%%%%%%%%%%%%%%%%%%%%%%%%%%%%%%%%%%%%%%%%%%

\section{Main ideas} \label{sec_mainideas}

\subsection{Problem} \label{subsec_problem}

Let $R:=\k[\x]$ be a polynomial  ring over a field $\k$ with $n$ variables. Suppose $\{f_1, \cdots, f_m\}$  is a finite subset of  $ R$.  We want to compute a \gr basis for the ideal
$$I:=\langle f_1, \cdots, f_m\rangle=\{p_1f_1+\cdots+p_mf_m \mid p_1,\cdots,p_m \in R\} $$
with respect to some term order on $R$.

Fix a term order $\prec_1$ on $R$. We define the {\em leading power product} and {\em leading coefficient} of a polynomial $f\in R$ to be $\lpp(f)$ and $\lc(f)$ in general way. For example, let $f := 2x^2y+3z \in \Q[x, y, z]$ where $\Q$ is the rational field. Then $\lpp(f)=x^2y$ and $\lc(f) = 2$.

As we know, a set $G\subset I$ is a \gr basis for $I$, if and only if $$\langle \lpp(G)\rangle = \langle \lpp(I)\rangle.$$ That is, a \gr basis should contain all the leading power product information of $I$. So in order to compute a \gr basis for $I$, all existing algorithms start with a set of known generators, and then a \gr basis can be obtained by expanding these known generators constantly with polynomials having new leading power products. To get the polynomials that have new leading power products, the only way is to {\em reduce} polynomials in $I$. However, if a polynomial is reduced to $0$, then this reduction is redundant, since no new leading power product appears. In this case, criteria for \gr basis algorithms are created, and {\em all criteria aim to avoid computations that reduce polynomials to $0$}.

Now we should answer an important question: given $f\in I$ and $G\subset I$, {\bf how can we {\em predict} the reducing result of $f$ by $G$ without really reducing it?}

{\em Signature-based algorithms} give a good solution to this question, and we notice that their common methods are based on {\bf ordering the polynomials in $I$ according their signatures in order to get a beautiful property}. This beautiful property is a key fact in signature-based algorithms, and it will be presented in Subsection \ref{subsec_keyfact}. First, let us see what is the signature of a polynomial in $I$.

\subsection{Signature}

We will use the following simple example to help illustrate some notions in this subsection, and these notions can be extended to general case easily.

\begin{example} \label{exa_mainidea}
Let $I:=\langle f_1, f_2, f_3 \rangle$ be an ideal in the polynomial ring $R=\Q[x, y, z]$, where $f_1 = yz-x$, $f_2 = xz-y$, $f_3 = xy-z$. The term order $\prec_1$ is the Degree Reverse Lex order with $(x \succ y \succ z)$.
\end{example}

For a polynomial $f=y^2-z^2 \in I$, since $\{f_1, f_2, f_3\}$ is a set of generators of $I$, the polynomial $f$ has a representation w.r.t. $f_1, f_2, f_3$: $$f=0\cdot f_1 -y\cdot f_2 + z\cdot f_3=(0, -y, z)\cdot (f_1, f_2, f_3),$$ where ``$\cdot$" is the inner product of two vectors.

The vector $(f_1, f_2, f_3)$ is fixed to the ideal $I$, so the polynomial $f$ is determined by the vector $(0, -y, z)$. Let $\u := (0, -y, z)\in \Q[x,y,z]^3$. Then the vector $\u$ can be regard as an {\bf ID} of $f$. Note that ID of $f=y^2-z^2$ is not unique. For example, $\u' = (xz-y, -yz+x-y, z)\in \Q[x,y,z]^3$ is also an ID of this $f$.

In general case, for any $f\in I$, there always exists $\u = (p_1, p_2, p_3) \in R^3=\Q[x,y,z]^3$, such that $$f  = \u \cdot (f_1, f_2, f_3) = p_1\cdot f_1 + p_2\cdot f_2 + p_3\cdot f_3.$$
That is, any polynomial in $I$ has at least one ID.  To express the relation between $f$ and $\u$, we use the notation $f^{[\u]}$, which means $f=\u \cdot (f_1, f_2, f_3)$. \footnote{An equivalent notation $(\u, f)$ is used in \citep{SunWang11}. Now we prefer $f^{[\u]}$ to $(\u, f)$, since the notation $f^{[\u]}$ indicates $\u$ is only an auxiliary value to $f$.} For convenience, we also call $f^{[\u]}$ to be a polynomial in $I$. For example, ${(y^2-z^2)}^{[-y\e_2 + z\e_3]}$ and ${(y^2-z^2)}^{[(xz-y)\e_1 + (-yz+x-y)\e_2 + z\e_3]}$ are two polynomials in $I$, and {\em we treat ${(y^2-z^2)}^{[-y\e_2 + z\e_3]}$ and ${(y^2-z^2)}^{[(xz-y)\e_1 + (-yz+x-y)\e_2 + z\e_3]}$ as different polynomials in this paper}, i.e. $\fu=f'^{[\u']}$ if and only if $f=f'$ and $\u = \u'$.

The computations on $f^{[\u]}$ can be defined naturally. Let $f^{[\u]}$ and $g^{[\v]}$ be two polynomials such that $f=\u \cdot (f_1, f_2, f_3)$ and $g = \v \cdot (f_1, f_2, f_3)$, $c$ be a constant in $\Q$ and $t$ be a power product in $R$. Then
\begin{enumerate}
\item $\fu + \gv = (f+g)^{[\u+\v]}$.

\item $ct (f^{[\u]}) = (ctf)^{[ct\u]}$.
\end{enumerate}
Clearly, the operations are well defined, i.e. $f+g = (\u+\v) \cdot (f_1, f_2, f_3)$ and $ctf = (ct\u) \cdot (f_1, f_2, f_3)$.

Since $\u$ is a vector in the free module $R^3$, we consider a term order $\prec_2$ on $R^3$. The term order $\prec_2$ can be any admissible term order. In this example, we use the term order introduced in F5, i.e. $$x^\alpha\e_i \prec_2 x^\beta\e_j    \mbox{  iff  } \left\{\begin{array}{l} i > j, \\ \mbox{ or } \\ i = j \mbox{ and } x^\alpha \prec_1 x^\beta, \end{array}\right.$$ where $\e_1=(1, 0, 0)$, $\e_2 = (0, 1, 0)$ and $\e_3=(0, 0, 1)$. When the term order on $R^3$ is fixed, we can define the {\em leading power product} and {\em leading coefficient} of $\u=(p_1, p_2, p_3)=p_1\e_1+p_2\e_2+p_3\e_3 \in R^3$ to be $\lpp(\u)$ and $\lc(\u)$ similarly. More related definitions on ``module" can be found in Chapter 5 of \citep{CLO04}.

Then for a polynomial $f^{[\u]}$ where $f=\u\cdot (f_1, f_2, f_3)$, we define $\lpp(\u)$ to be the {\bf signature} of $f^{[\u]}$. For example, the signature of ${(y^2-z^2)}^{[-y\e_2 + z\e_3]}$ is $\lpp(-y\e_2+z\e_3) = y\e_2$. Original definition of signature is introduced by Faug\`ere in \citep{Fau02}, and recently, Gao et al. give a generalized definition of signature in \citep{Gao10b}. In this paper, we use the generalized definition given by Gao et al.

With signatures, we can then compare polynomials in $I$ w.r.t. their signatures. That is, we say $f^{[\u]}$ is {\em bigger} than $g^{[\v]}$, if $f^{[\u]}$ has bigger signature than $g^{[\v]}$, i.e. $\lpp(\u) \succ_2 \lpp(\v)$. Now we actually set up an {\em ordering} on the polynomials in $I$. Moreover, if we deal with the polynomials according to this ordering, we will have a very beautiful property, which is the key fact in next subsection.

%%%%%%%%%%%%%%%%%%%%%%%%%%%%%%%%%%%%%%%%%%%%%%%6.20m%%%%%%%%%%%%%%%%%%%%%%%%%%%%%%%%%%%%%%%%%%%%%%%

%%%%%%%%%%%%%%%%%%%%%%%%%%%%%%%%%%%%%%%%%%%%%%%6.21s%%%%%%%%%%%%%%%%%%%%%%%%%%%%%%%%%%%%%%%%%%%%%%%

\subsection{Key fact} \label{subsec_keyfact}

For a general ideal $I=\langle f_1, \cdots, f_m\rangle \subset R$, we find the following key fact.

\vh
\noindent
{\bf Key Fact:} {\em
Let $f^{[\u]}$, $g^{[\v]}$ be two polynomials and $G$ be a subset of $I$. Suppose $f^{[\u]}$ and $g^{[\v]}$ are reduced to ${f'}^{[\u']}$ and ${g'}^{[\v']}$ by $G$ respectively. Then $f'$ and $g'$ have the same leading power product, i.e. $\lpp({f'})=\lpp({g'})$, if {\bf\em $f^{[\u]}$ and $g^{[\v]}$ have the same signature}, i.e. $\lpp(\u) = \lpp(\v)$, and two extra conditions hold.
}\vh

Briefly, Key Fact means that {\em reducing $f$ and $g$ could create the same leading power product if  $f^{[\u]}$ and $g^{[\v]}$ have the same signature}. This fact is very interesting and important, from which we can predict the reducing result of polynomials without really reducing them, and hence we can answer the important question proposed in Subsection \ref{subsec_problem}.

Next, let us see the two extra conditions. We emphasize the first condition.

\vh\noindent
{\bf Condition 1:}
{\em A {\bf\em one-side-reduction}, which is defined below, must be used in Key Fact.}

\begin{define} \label{defn_reduce}
We say $f^{[\u]}$ is {\bf\em reducible} by $h^{[\w]}\in G$, only if
\begin{enumerate}
\item $\lpp(h)$ divides $\lpp(f)$, and

\item $t(\hw)$'s signature $\prec_2$ $f^{[\u]}$'s signature, i.e. $\lpp(t\w) \prec_2 \lpp(\u)$ where $t = \lpp(f)/\lpp(h)$.
\end{enumerate}

If $f^{[\u]}$ is reducible by $h^{[\w]}\in G$, then $f^{[\u]}\longmapsto_G f^{[\u]}-ct(h^{[\w]})$ is called a {\bf\em one-step-reduction} by $G$ where $c=\lc(f)/\lc(h)$ and $t = \lpp(f)/\lpp(h)$.

We say {\bf\em $f^{[\u]}$ is reduced to ${f'}^{[\u']}$ by $G$}, if ${f'}^{[\u']}$ is obtained by several one-step-reductions from $f^{[\u]}$, and ${f'}^{[\u']}$ is not reducible by $G$.
\end{define}

In simple words, this one-side-reduction indicates {\em $f^{[\u]}$ can only be reduced by polynomials having smaller signatures}. In Example \ref{exa_mainidea}, $(xyz-y^2)^{[y\e_2]}$ is {\em reducible} by $(xy-z)^{[\e_3]}$ but {\em not reducible} by $(yz-x)^{[\e_1]}$. The reason comes from the constraint of signatures.

Note that for the result $f^{[\u]}-ct(h^{[\w]})=(f-cth)^{[\u-ct\w]}$ of the one-step-reduction, we still have $(\u-ct\w) \cdot (f_1, \cdots, f_m) = \u \cdot (f_1, \cdots, f_m) - ct \w \cdot (f_1, \cdots, f_m) = f - cth$. So for ${f'}^{[\u']}$, the equation $f'=\u'\cdot (f_1, \cdots, f_m)$ also holds. Moreover, if $f^{[\u]}$ is reduced to ${f'}^{[\u']}$, then $f^{[\u]}$ and ${f'}^{[\u']}$ must have the same signature, i.e. $\lpp(\u) = \lpp(\u')$.

Now we can see how the {\em ordering} on the polynomials is used in Key Fact. That is, if  $f^{[\u]}$ and $g^{[\v]}$ have the same signature, and only the polynomials having smaller signatures are used to reduce $f^{[\u]}$ and $g^{[\v]}$, then the reducing results $f'$ and $g'$ could have the same leading power product.

Therefore, this one-side-reduction is a necessary condition to the key fact. We notice that {\em all existing signature-based algorithms are using this kind of one-side-reduction.}

\vh\noindent
{\bf Condition 2:}
{\em For any $\brf^{[\bru]} \in I$ with $\brf^{[\bru]}$'s signature $\prec_2$ $\fu$'s signature, i.e. $\lpp(\bru) \prec_2 \lpp(\u)$, there always exists $h^{[\w]}\in G$ such that
\begin{enumerate}
\item $\lpp(h)$ divides $\lpp(\bru)$, and

\item $t(\hw)$'s signature $\preceq_2$ $\brf^{[\bru]}$'s signature, i.e. $\lpp(t\w) \preceq_2 \lpp(\bru)$ where $t=\lpp(\brf)/\lpp(h)$.
\end{enumerate}
}

The second condition may be a bit difficult to understand, but it is satisfied in all existing signature-based algorithms.

With Condition 1 and 2, we can prove Key Fact easily.

\begin{proof}[Proof of Key Fact]
We prove it by contradiction.

Assume $\lpp(f') \succ \lpp(g')$. Since a one-side-reduction is used in Key Fact, we have $\lpp(\u') = \lpp(\u) = \lpp(\v) = \lpp(\v')$. Let $\brf^{[\bru]} := f'^{[\u']} - c(g'^{[\v']})$ where $c=\lc(\u')/\lc(\v')$, then $\lpp(\brf) = \lpp(f')$ and $\lpp(\bru) \prec_2 \lpp(\u') = \lpp(\u)$. By Condition 2, there exists $h^{[\w]}\in G$ such that $\lpp(h)$ divides $\lpp(\brf)= \lpp(f')$ and $\lpp(t\w) \preceq_2 \lpp(\bru)\prec_2 \lpp(\u')$ where $t=\lpp(\brf)/\lpp(h)$.
This means $f'^{[\u']}$ is still reducible by $h^{[\w]}\in G$, which contracts with the definition of one-side-reduction.

The case $\lpp(f') \prec \lpp(g')$ can be proved similarly.
\end{proof}

%%%%%%%%%%%%%%%%%%%%%%%%%%%%%%%%%%%%%%%%%%%%%%%6.21m%%%%%%%%%%%%%%%%%%%%%%%%%%%%%%%%%%%%%%%%%%%%%%%

\subsection{Observations}

Using Key Fact, we get the following important observations.

\vh\noindent
{\bf Observations 1:}
{\em If $\fu$ and $\gv$ have the same signature, then at most one of them is necessary to be reduced.}

\vh\noindent
{\bf Observations 2:}
{\em Particularly, if $\fu$ and $\gv$ have the same signature and either $f=0$ or $g=0$, then neither one is necessary to be reduced.}
\vh

We notice that all existing criteria are based on the above two observations. These observations also motivate the generalized criterion for signature-based algorithms.

\section{Generalized Criterion} \label{sec_criterion}

\subsection{Generalized criterion} \label{subsec_gencri}

Let $R:=\k[\x]$ and $\f:=(f_1, \cdots, f_m) \in R^m$. In the rest of paper, we consider the following ideal
$$I:=\langle f_1, \cdots, f_m\rangle=\{\u\cdot \f = p_1f_1+\cdots+p_mf_m \mid \u=(p_1,\cdots,p_m) \in R^m\}$$
with respect to some term order on $R$. The notation $\fu$ always means $f=\u\cdot \f$, and for convenience, we also call $\fu$ to be a polynomial in $I$ and write $\fu\in I$. Let $\e_i$ be the $i$-th unit vector of $R^m$, i.e. $(\e_i)_j=\delta_{ij}$ where $\delta_{ij}$ is the Kronecker delta. Then $f_1^{[\e_1]}, \cdots, f_m^{[\e_m]}$ are polynomials in $I$. Note that if there exists $\u'\not=\u$ such that $f=\u'\cdot \f$, then $\fu$ and $f^{[\u']}$ are treated as two different polynomials in $I$.

Fix {\em any} term order $\prec_1$ on $R$ and {\em any} term order $\prec_2$ on $R^m$. We must emphasize that the order $\prec_2$ may or may not be related to $\prec_1$ in theory, although $\prec_2$ is usually an extension of $\prec_1$ to $R^m$ in implementation. For sake of convenience, we use $\prec$ to represent $\prec_1$ and $\prec_2$, if no confusion occurs. We make the convention that if $f=0$ then $\lpp(f)=0$ and $0 \prec t$ for any non-zero power product $t$ in $R$; similarly for $\lpp(\u)$.

Given a finite set $B\subset I$, consider a {\bf partial order} ``$<$" defined on $B$, where ``$<$" has:
\begin{enumerate}

\item Non-Reflexivity: $\fu \nless \fu$ for all $\fu \in B$.

\item Antisymmetry: $\fu < \gv$ does not imply $\gv < \fu$, where $\fu, \gv\in B$.

\item Transitivity: $\fu < \gv $ and $\gv < \hw$ imply $\fu < \hw$, where $\fu, \gv, \hw \in B$.

\end{enumerate}

Now we give a generalized criterion for signature-based algorithms.

\begin{define}[generalized rewritable criterion]
Let $B$ be a subset of $I$, ``$<$" be a partial order on $B$, $\fu$ be a polynomial in $B$ and $t$ be a power product in $R$ where $f\not=0$. We say $t(\fu)$ is {\bf generalized rewritable} by $B$ ({\bf gen-rewritable} for short), if there exists $\gv \in B$ such that
\begin{enumerate}

\item $\lpp(\v)$ divides $\lpp(t\u)$, and

\item $\gv  < \fu$.

\end{enumerate}
\end{define}

If $t(\fu)$ is gen-rewritable by $\gv \in B$, then $\lpp(\v)$ divides $\lpp(t\u)$. Let $t':=\lpp(t\u)/\lpp(\v)$. Note that $t(\fu)$ and $t'(\gv)$ have the same signature, i.e. $\lpp(t\u) = \lpp(t'\v)$, so according to Observation 1, at most one of $t(\fu)$ and $t'(\gv)$ is necessary to be reduced during the computations. The partial order ``$<$" on $B$ will help to choose the polynomial that is not to be reduced, and in the above definition, the ``bigger" polynomial under the partial order is selected. So in practice, if $t(\fu)$ is gen-rewritable by $B$, then $t(\fu)$ will not be reduced.

Generally, the partial order on $B$ can be defined in many ways. For example, since the set $B$ is usually the intermediate set of generators and polynomials in $B$ are often added one by one,
then we can define ``$<$" as: $\gv < \fu, \mbox{ if } \gv \mbox{ is added to } B \mbox{ later than } \fu.$ There are two other partial orders: $\gv < \fu$ if $\lpp(g)< \lpp(f)$, or even $\gv < \fu$ if $f$ has more terms than $g$. All of these partial orders can be used in the above definition, but as we will see later, {\em not all partial orders lead to correct criterion}.

%%%%%%%%%%%%%%%%%%%%%%%%%%%%%%%%%%%%%%%%%%%%%%%6.21a%%%%%%%%%%%%%%%%%%%%%%%%%%%%%%%%%%%%%%%%%%%%%%%

Observation 2 says if $\fu$ and $\gv$ have the same signature and either $f=0$ or $g=0$, then neither $\fu$ nor $\gv$ is necessary to be reduced. In fact, $0^{[\u]}$ means $\u$ is a syzygy of $\f=(f_1, \cdots, f_m)$, i.e. $\u\cdot \f =0$. For convenience, we call the polynomial $0^{[\u]}$ to be {\bf syzygy polynomial}. By using syzygy polynomials, the generalized criterion can be enhanced. That is, we can add syzygy polynomials to the set $B$ and assume syzygy polynomials are ``smaller" than other polynomials under the partial order, then more redundant computations can be rejected. This technique is used in the algorithm AGC in next subsection.

The following proposition shows many syzygy polynomials can be obtained directly.

\begin{prop}\label{prop_syzygy}
Let $\fu$ be a polynomial in $I$. Then $0^{[f\e_i-f_i\u]}$ is a syzygy polynomial where $1 \le i \le m$.
\end{prop}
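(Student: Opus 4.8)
The claim is that for any polynomial $\fu \in I$ (so $f = \u\cdot\f$) and any index $i$ with $1\le i\le m$, the vector $f\e_i - f_i\u$ is a syzygy of $\f$, i.e. $(f\e_i - f_i\u)\cdot\f = 0$. The plan is simply to expand the inner product using bilinearity and the defining relation $f = \u\cdot\f$. First I would compute $(f\e_i - f_i\u)\cdot\f = f(\e_i\cdot\f) - f_i(\u\cdot\f)$. By definition $\e_i\cdot\f = f_i$, and by hypothesis $\u\cdot\f = f$, so the expression becomes $f f_i - f_i f = 0$. Hence $0^{[f\e_i - f_i\u]}$ is a syzygy polynomial in the sense defined just before the proposition.

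The only subtlety worth spelling out is that $f\e_i - f_i\u$ is indeed an element of $R^m$ (it is, since $f, f_i \in R$ and $\e_i, \u \in R^m$, and $R^m$ is an $R$-module), and that the notation $0^{[f\e_i - f_i\u]}$ is legitimate: we need $0 = (f\e_i - f_i\u)\cdot\f$, which is exactly what the computation above shows. So there is no real obstacle here — the statement is essentially immediate from the module structure of $R^m$ and the definition $\fu \Leftrightarrow f = \u\cdot\f$. I would also remark, as the surrounding text suggests, that these syzygies are useful precisely because they can be read off from any polynomial already in hand, without extra computation, and can be inserted into the set $B$ as "small" elements to strengthen the gen-rewritable criterion.
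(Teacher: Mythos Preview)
Your proof is correct and is essentially identical to the paper's own argument: both simply expand $(f\e_i - f_i\u)\cdot\f = f f_i - f_i f = 0$ using bilinearity and the defining relation $f = \u\cdot\f$.
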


\begin{proof} Since $f=\u \cdot (f_1, \cdots, f_m)$, then
$$(f\e_i - f_i\u) \cdot (f_1, \cdots, f_m) = f \e_i \cdot (f_1, \cdots, f_m) - f_i \u \cdot (f_1, \cdots, f_m) = f f_i - f_i f = 0.$$
\end{proof}
Since the syzygy polynomial $0^{[f\e_i-f_i\u]}$ in Proposition \ref{prop_syzygy} uses the principal syzygy of $f$ and $f_i$, we call syzygy polynomials in form of $0^{[f\e_i-f_i\u]}$ to be {\bf principle syzygy polynomials}.

In Section \ref{sec_specializations}, we will show how the generalized criterion specializes to F5 and GVW's criteria. Next, we describe how this generalized criterion is used in algorithm.

\subsection{How the generalized criterion is used?} \label{subsec_algorithm}

We first define the {\em critical pairs} of two polynomials. Suppose $\fu, \gv$ are two polynomials with $f$ and $g$ both nonzero. Let $t:=\lcm(\lpp(f), \lpp(g))$, $t_f:=t/\lpp(f)$ and $t_g:=t/\lpp(g)$. If $t_f(\fu)$'s signature $\succeq$ $t_g(\gv)$'s signature, i.e. $\lpp(t_f\u) \succeq \lpp(t_g\v)$, then the following 4-tuple vector $$(t_f, \fu, t_g, \gv)$$ is called the {\bf critical pair} of $\fu$ and $\gv$. The corresponding {\bf S-polynomial} is $t_f(\fu)-ct_g(\gv)$ where $c=\lc(f)/\lc(g)$. Please keep in mind that, for any critical pair $(t_f, \fu, t_g, \gv)$, we always have $t_f(\fu)$'s signature $\succeq$ $t_g(\gv)$'s signature, i.e. $\lpp(t_f\u) \succeq \lpp(t_g\v)$. Also note that $t_f$ (or $t_g$) here does not mean it only depends on $f$ (or $g$). For convenience, the critical pair of $\fu$ and $\gv$ is also denoted as $[\fu, \gv]$ or $[\gv, \fu]$ for short, and we say $[\fu, \gv]$ is a critical pair of $B$, if both $\fu$ and $\gv$ are in $B$.

Critical pairs can be classed in three kinds. Let $(t_f, \fu, t_g, \gv)$ be a critical pair and $t_f(\fu)-ct_g(\gv)$ be its S-polynomial where $c=\lc(f)/\lc(g)$.

\begin{enumerate}

\item If $t_f(\fu)-ct_g(\gv)$'s signature $\prec$ $t_f(\fu)$'s signature, i.e. $\lpp(t_f\u - ct_g\v) \not= \lpp(t_f\u)$, then we say $(t_f, \fu, t_g, \gv)$ is {\bf non-regular}.

\item If $t_f(\fu)-ct_g(\gv)$, $t_f(\fu)$ and $t_g(\gv)$ have the same signature, i.e. $\lpp(t_f\u - ct_g\v) = \lpp(t_f\u) = \lpp(t_g\v)$, then $(t_f, \fu, t_g, \gv)$ is called {\bf super regular}.

\item If $t_f(\fu)$'s signature $\succ$ $t_g(\gv)$'s signature, i.e. $\lpp(t_f\u) \succ \lpp(t_g\v)$, then we call $(t_f, \fu, t_g, \gv)$ to be {\bf regular}.

\end{enumerate}

We say a {\bf  critical pair  $(t_f, \fu, t_g, \gv)$ is gen-rewritable by a set $B$}, if {\em either} $t_f(\fu)$ {\em or} $t_g(\gv)$ is gen-rewritable by $B$.

Then the generalized criterion is used in the following way:

\vh{\em A critical pair $(t_f, \fu, t_g, \gv)$ of $B$ is {\bf\em rejected by the generalized criterion}, if
\begin{enumerate}
\item it is not regular, i.e. $\lpp(t_f\u) = \lpp(t_g\v)$, or

\item it is regular and generalized rewritable by $B$, i.e. $\lpp(t_f\u) \succ \lpp(t_g\v)$, and  either $t_f(\fu)$ or $t_g(\gv)$ is generalized rewritable by $B$.
\end{enumerate}
}

If a critical pair is rejected by the generalized criterion, then this critical pair will not be considered in algorithm. We can also show how the generalized criterion is used through a simple algorithm(Algorithm \ref{alg_agc}).

\begin{algorithm}[!ht]
\DontPrintSemicolon
\SetAlgoSkip{}
\SetAlgoNoLine
\SetKwInOut{Input}{Input}
\SetKwInOut{Output}{Output}
\SetKwFor{For}{for}{do}{end\ for}
\SetKwIF{If}{ElseIf}{Else}{if}{then}{else\ if}{else}{end\ if}

\Input{$f_1^{[\e_1]}, \cdots, f_m^{[\e_m]}$.}
\Output{A subset $G \subset \langle f_1, \cdots, f_m\rangle$.}

\Begin{
$G \lla \{f_i^{[\e_i]} \mid i=1, \cdots, m\} \cup \{0^{[f_j\e_i - f_i\e_j]} \mid 1 \leq i < j \leq m\}$ \spc\spc\spc\spc\spc $(\lozenge)$

$\mbox{\em CPairs} \lla \{[\fu, \gv] \mid \fu, \gv \in G\}$

\While{$\mbox{\sl CPairs} \not= \emptyset$}
{
  $[\fu, \gv] = (t_f, \fu, t_g, \gv) \lla $ {\bf any} critical pair in {\em CPairs} \spc\spc\spc $(\bigstar)$

  $\mbox{\em CPairs} \lla \mbox{\em CPairs} \setminus \{[\fu, \gv]\}$

  \If{$[\fu, \gv]$ is {\bf regular} and is {\bf not gen-rewritable} by $G$}
  {
    $\hw \lla$ reduce the S-polynomial of $[\fu, \gv]$ by $G$

    $\mbox{\em CPairs} \lla \mbox{\em CPairs} \cup \{ [\hw, \gv] \mid \gv \in G\}$

    $G \lla G \cup \{\hw\} \cup \{0^{[h\e_i - f_i\w]} \mid i =1,\cdots,m \}$ \spc\spc\spc\spc\spc\spc\spc $(\lozenge)$
  }
}
{\bf return} $G$
}
\caption{The algorithm with generalized criterion (AGC) \label{alg_agc}}
\end{algorithm}

%%%%%%%%%%%%%%%%%%%%%%%%%%%%%%%%%%%%%%%%%%%%%%%6.21e%%%%%%%%%%%%%%%%%%%%%%%%%%%%%%%%%%%%%%%%%%%%%%%

%%%%%%%%%%%%%%%%%%%%%%%%%%%%%%%%%%%%%%%%%%%%%%%6.22s%%%%%%%%%%%%%%%%%%%%%%%%%%%%%%%%%%%%%%%%%%%%%%%

For the above algorithm, please note that
 \begin{enumerate}
\item The gen-rewritable criterion uses a partial order defined on $G$. While new elements are added to $G$, the partial order on $G$ needs to be updated simultaneously. Fortunately, most partial orders can be updated automatically.

\item  For the line ended with ($\bigstar$), we emphasize that any critical pair can be selected, while some other algorithm, such as GVW, always selects the critical pair with minimal signature.

\item {\em Principle syzygy polynomials} are added to $G$ at lines marked with $(\lozenge)$.

\item The S-polynomial of $[\fu, \gv]$ is reduced by the one-side-reduction defined in Definition \ref{defn_reduce}. Note that for the reducing result $\hw$, we still have $h=\w\cdot (f_1, \cdots, f_m)$. Other similar one-side-reductions in \citep{Gao10b, Ars09, Fau02} can also be used here.

\item The S-polynomial of $(t_f, \fu, t_g, \gv)$ is considered only when $(t_f, \fu, t_g, \gv)$ is {\em regular}, so its S-polynomial $t_f(\fu)-ct_g(\gv)$ and $t_f(\fu)$ have the same signature where $c=\lc(f)/\lc(g)$. Besides, the one-side-reduction does not affect the signatures, i.e. $t_f(\fu)$ and $\hw$ also have the same signature. Therefore, for sake of efficiency, it suffices to record $f$ and $\lpp(\u)$ for each $\fu\in G$ in the practical implementation, which is just the same as that F5 does.
\end{enumerate}

The algorithm AGC aims to compute a \gr basis for $\langle f_1, \cdots, f_m\rangle$. However, the generalized criterion may reject useful critical pairs sometimes, which makes the output of the algorithm AGC is not a \gr basis. In next subsection, we will show when the generalized criterion is correct, or equivalently, when the algorithm AGC outputs a \gr basis.

\subsection{When the generalized criterion is correct?}

In fact, the algorithm AGC can construct a even ``stronger" version of \gr basis. Let $$G := \{g_1^{[\v_1]}, \cdots, g_s^{[\v_s]}\}$$ be a finite subset of $I$. We call $G$ a {\bf labeled Gr\"obner basis}\footnote{Labeled \gr basis is exactly the same as the S-\gr basis in \citep{SunWang11}, and it is also a simpler version of {\em strong \gr basis} defined in \citep{Gao10b}, so the GVW algorithm computes a labeled Gr\"obner basis. We proved in another paper that F5 also computes a labeled \gr basis.} for $I$, if for any $\fu \in I$ with $f\not=0$, there exists $\gv\in G$ such that
\begin{enumerate}

\item $\lpp(g)$ divides $\lpp(f)$, and

\item $t(\gv)$'s signature $\preceq$ $\fu$'s signature, i.e. $\lpp(t\v) \preceq \lpp(\u)$, where $t=\lpp(f)/\lpp(g)$.

\end{enumerate}

\begin{prop} \label{prop_gb}
If $G$ is a labeled \gr basis for $I$, then the set $\{g \mid \gv \in G\}$ is a \gr basis of the ideal $I=\langle f_1, \cdots, f_m\rangle$.
\end{prop}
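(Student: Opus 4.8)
The plan is to show that $\langle \lpp(\{g \mid \gv \in G\})\rangle = \langle \lpp(I)\rangle$, which by the standard characterization of \gr bases recalled in Subsection \ref{subsec_problem} is exactly what is needed (together with the easy observation that each $g$ with $\gv\in G$ lies in $I$, since $g = \v\cdot\f$). The inclusion $\langle \lpp(\{g\})\rangle \subseteq \langle\lpp(I)\rangle$ is immediate because every such $g$ is an element of $I$. So the real content is the reverse inclusion: every leading power product of an element of $I$ is divisible by some $\lpp(g)$ with $\gv\in G$.

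To prove the reverse inclusion, I would take an arbitrary nonzero $p \in I$ and must produce $\gv \in G$ with $\lpp(g) \mid \lpp(p)$. The difficulty is that the labeled \gr basis property is stated for polynomials $\fu \in I$ carrying a specific ID $\u$, not for bare polynomials. So first I would choose \emph{any} ID for $p$: since $p\in I$, there exists $\u\in R^m$ with $p = \u\cdot\f$, giving a polynomial $p^{[\u]}\in I$ with $p\neq 0$. Now apply the defining property of a labeled \gr basis to $p^{[\u]}$: there exists $\gv\in G$ such that $\lpp(g) \mid \lpp(p)$ and $\lpp(t\v)\preceq \lpp(\u)$ where $t = \lpp(p)/\lpp(g)$. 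The second condition on signatures is not needed for this conclusion — we only need the divisibility $\lpp(g)\mid\lpp(p)$, which is clause (1). Hence $\lpp(p)\in\langle\lpp(\{g\mid\gv\in G\})\rangle$, and since $p$ was arbitrary we get $\langle\lpp(I)\rangle \subseteq \langle\lpp(\{g\mid\gv\in G\})\rangle$.

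Combining the two inclusions yields $\langle\lpp(\{g\mid\gv\in G\})\rangle = \langle\lpp(I)\rangle$, and since $\{g\mid\gv\in G\}\subset I$ is finite, it is a \gr basis of $I$ by the characterization quoted earlier. I do not expect any serious obstacle here: the only subtlety worth a sentence is that the labeled \gr basis definition quantifies over all $\fu\in I$, so in particular it applies to the chosen lift $p^{[\u]}$ of any $p\in I$; the signature hypothesis in clause (2) is simply discarded. One might also remark that $G$ being finite is part of its definition as a labeled \gr basis (it is written $G=\{g_1^{[\v_1]},\dots,g_s^{[\v_s]}\}$), so finiteness of the projected set is automatic.
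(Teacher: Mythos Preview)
Your proposal is correct and follows essentially the same approach as the paper: lift an arbitrary nonzero $p\in I$ to some $p^{[\u]}$ and apply the defining property of a labeled \gr basis to extract $\gv\in G$ with $\lpp(g)\mid\lpp(p)$. The paper's proof is just a terser version of your argument, omitting the explicit two-inclusion framing and the remarks about discarding clause~(2) and finiteness.
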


\begin{proof}
For any $f\in \langle f_1, \cdots, f_m\rangle$, there exist $p_1, \cdots, p_m \in R$ such that $f=p_1f_1 + \cdots + p_m f_m$. Let $\u:=(p_1, \cdots, p_m)$. Then $\fu \in I$ and hence there exists $\gv\in G$ such that $\lpp(g)$ divides $\lpp(f)$ by the definition of labeled \gr basis.
\end{proof}

%%%%%%%%%%%%%%%%%%%%%%%%%%%%%%%%%%%%%%%%%%%%%%%6.22m%%%%%%%%%%%%%%%%%%%%%%%%%%%%%%%%%%%%%%%%%%%%%%%

The algorithm AGC outputs a labeled \gr basis for $I$, if the partial order in the generalized criterion is {\em admissible}. In the algorithm AGC, we say a partial order ``$<$" is {\bf admissible}, if for any critical pair $(t_f, \fu, t_g, \gv)$ of $G$, whenever we need to reduce the S-polynomial of $(t_f, \fu, t_g, \gv)$ to $\hw$ by $G$, we always have $\hw < \fu$ after updating ``$<$" for $G \cup \{\hw\}$. In next section, we will show that the partial orders implied by F5 and GVW's criteria are both admissible.

Note that only the critical pair {\em that is regular and not gen-rewritable} is really reduced in the algorithm AGC, so when checking whether a partial order is admissible, we do not care about the critical pairs that are rejected by the generalized criterion. Besides, we emphasize that in the above definition of admissible, the relation $\hw < \fu$ is essential, and $\hw$ may not be related to other elements in $G$.

%%%%%%%%%%%%%%%%%%%%%%%%%%%%%%%%%%%%%%%%%%%%%%%6.22e%%%%%%%%%%%%%%%%%%%%%%%%%%%%%%%%%%%%%%%%%%%%%%%

%%%%%%%%%%%%%%%%%%%%%%%%%%%%%%%%%%%%%%%%%%%%%%%6.23s%%%%%%%%%%%%%%%%%%%%%%%%%%%%%%%%%%%%%%%%%%%%%%%

The following theorem shows when the generalized criterion is correct in the algorithm AGC. The proof of theorem will be presented in Section \ref{sec_proof}.

\begin{theorem}  \label{thm_main}
Let $I := \langle f_1, \cdots, f_m \rangle$ be an ideal in $R$. Then a labeled \gr basis for $I$ can be constructed by the algorithm AGC, if the algorithm AGC terminates in finite steps and the partial order in the generalized criterion is admissible.
\end{theorem}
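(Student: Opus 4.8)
The plan is to prove that the output $G$ of AGC is a labeled \gr basis for $I$, assuming termination and admissibility of the partial order. By Proposition \ref{prop_gb} this suffices. So I must show: for every $\fu \in I$ with $f \neq 0$, there exists $\gv \in G$ with $\lpp(g) \mid \lpp(f)$ and $\lpp(t\v) \preceq \lpp(\u)$ where $t = \lpp(f)/\lpp(g)$. The natural approach is by \emph{minimal counterexample on the signature}: suppose $\fu \in I$, $f \neq 0$, has no such ``covering'' element in $G$; among all such bad $\fu$, pick one with $\lpp(\u)$ minimal under $\prec_2$ (this uses that $\prec_2$ is a well-order on the relevant monoid, or rather that the set of signatures of elements of $I$ that fail is nonempty and hence has a minimal element — I would need to justify that a minimal signature exists, e.g. via Dickson's lemma or well-ordering of the term order on $R^m$).

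The heart of the argument is then to derive a contradiction from minimality. First I would observe that the signature $\mathfrak{s} := \lpp(\u)$ is attained: among elements of $I$ with signature $\preceq \mathfrak{s}$ and leading power product $\preceq$ some bound, we can take $\fu$ with, say, $\lpp(f)$ minimal subject to the signature being exactly $\mathfrak{s}$ and $\fu$ being bad — actually the cleaner route is to track critical pairs. Since $G$ contains $f_i^{[\e_i]}$ and the principal syzygy polynomials, and since the algorithm only terminates when \emph{CPairs} is empty, every critical pair of $G$ was either rejected by the generalized criterion or reduced. The key claim is: for any $\fu \in I$ with $f \neq 0$ and any element realizing signature $\mathfrak{s}$, either $\fu$ is already covered by $G$, or by combining elements of $G$ with matching signature one builds a critical pair / S-polynomial whose signature is $\mathfrak{s}$; that S-polynomial, when regular, was reduced by AGC to some $\hw$ with $\hw < \fu_{\text{relevant}}$ (by admissibility) and $\lpp(\w) = \mathfrak{s}$, giving an element of $G$ at signature $\mathfrak{s}$ that is not reducible — hence covers the leading term obstruction at that signature; if it was rejected as non-regular it corresponds to a syzygy (signature drop), and if rejected as gen-rewritable then some smaller (under $<$) element of $G$ with divisor signature exists, and here I invoke Key Fact (Subsection \ref{subsec_keyfact}): reducing two polynomials of the same signature produces the same leading power product, so the rejected pair is covered by the retained one. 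I then push the ``bad'' witness down to strictly smaller signature, contradicting minimality.

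More concretely, the induction step splits along the trichotomy of critical pairs and the structure of $\fu$. Write $\fu$ with $f\neq 0$ and signature $\mathfrak{s}$, not covered. Reduce $f$ by $\{g : \gv\in G\}$ as far as the one-side-reduction allows; by Definition \ref{defn_reduce} this does not change the signature. If the result has leading power product still not divisible by any $\lpp(g)$ of an element of $G$ with small enough signature, I want to produce a new element. The standard move: express $\fu$ modulo lower-signature elements as a $\k$-combination/S-polynomial of two elements $g_i^{[\v_i]}, g_j^{[\v_j]}\in G$ whose signatures are exactly $\mathfrak{s}$ (a ``cancellation at the top'') — this is where I replay the classical Buchberger-style argument that a minimal-signature obstruction forces a critical pair. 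That critical pair $(t_f, g_i^{[\v_i]}, t_g, g_j^{[\v_j]})$ has signature $\mathfrak{s}$. It was handled by AGC. If non-regular: the signature of the S-polynomial dropped, so $\fu$ reduces (in the module sense) to something of strictly smaller signature that is still bad — contradicts minimality. If regular and not gen-rewritable: AGC reduced it to $\hw\in G$ with $\lpp(\w)=\mathfrak{s}$; if $\lpp(h)\mid \lpp(f)$ we are done (covered), and if not, then $\fu - (\text{multiple of }\hw)$ has the same signature but smaller leading power product, and iterating (Dickson) either covers $f$ or forces a signature drop. If regular and gen-rewritable: there is $\gv\in G$ with $\lpp(\v)\mid\lpp(t_f\v_i)=\mathfrak{s}$ and $\gv < g_i^{[\v_i]}$; applying Key Fact to $t_f(g_i^{[\v_i]})$ and the matching multiple of $\gv$ shows they reduce to the same leading power product, so the obstruction is covered by the branch that used $\gv$ — which, being $<$-smaller, was (by admissibility and an inner induction on the partial order) already processed to yield a covering element of $G$.

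The main obstacle I anticipate is making the ``cancellation at the top forces a critical pair'' step rigorous in the \emph{signature-graded} setting: one must show that if $f$ cannot be covered by $G$ at signature $\mathfrak{s}$ but $\mathfrak{s}$ is minimal, then there genuinely exist two elements of $G$ with signature exactly $\mathfrak{s}$ whose top terms cancel appropriately — i.e. that the module element $\u$ can be rewritten, modulo strictly-lower-signature syzygies (which $G$ already ``knows'' via the principal syzygy polynomials and by minimality), as an honest combination of existing generators realizing $\mathfrak{s}$. Verifying Condition 2 of Key Fact in this context — that every lower-signature module element is top-reducible by $G$ — is exactly the inductive hypothesis, so the bookkeeping of the nested induction (outer on signature $\prec_2$, inner on the partial order $<$ via admissibility) is where the care is needed. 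I would set up the two inductions explicitly at the start and make sure the admissibility hypothesis is invoked precisely when a reduced S-polynomial $\hw$ must satisfy $\hw < \fu$ so that gen-rewritability of later pairs is consistent with having a covering element at signature $\mathfrak{s}$.
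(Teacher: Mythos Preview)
Your outline shares the paper's spirit but differs in organization, and the sketch has a real gap. The paper does not run a minimal-counterexample argument on elements $\fu\in I$. Instead it reduces, via Lemma~\ref{lem_correctness}, to showing that the S-polynomial of every critical pair of $G_{end}$ has a standard representation, and then processes the \emph{critical pairs themselves} in a four-component lexicographic order: (a) signature of the larger side, (b) the partial order $<$ on the first component, (c) signature of the smaller side, (d) $<$ on the second component. For each pair minimal in this order it proves simultaneously (F1) a standard representation for the S-polynomial and (F2) gen-rewritability of the larger side whenever the pair is (super) regular. The ``cancellation at the top forces a critical pair at $\mathfrak s$'' step that you flag as the main obstacle is thereby sidestepped entirely: one works at the level of critical pairs from the outset and invokes Lemma~\ref{lem_stdrepresentation} only afterwards to pass back to arbitrary $\fu\in I$.

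The substantive gap is in your gen-rewritable case. First, invoking Key~Fact here does not do what you want: it says the one-side reductions of $t_f(g_i^{[\v_i]})$ and of the matching multiple of $\gv$ have the same leading power product, but neither reduced form need lie in $G$, so no covering element of $G$ is produced. What the paper actually does (Claim~2) is chase to a $<$-minimal $f_0^{[\u_0]}$ with $t_0(f_0^{[\u_0]})$ not gen-rewritable, and then \emph{rule out} $\lpp(t_0 f_0)\succ\lpp(\brf)$ by manufacturing a new regular critical pair $[f_0^{[\u_0]},\hw]$ that is strictly smaller in the four-component order and hence already handled --- forcing $t_0(f_0^{[\u_0]})$ to be gen-rewritable after all, a contradiction. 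Your two-layer scheme might absorb that particular descent (it goes via clause~(b)), but you entirely omit the case where only the \emph{second} side $t_g(\gv)$ of the rejected pair is gen-rewritable. This is the paper's case~C5 and its Claim~3, and the descent there runs through clauses~(c) and~(d) --- same signature, same first component, strictly smaller second component --- for which your outer-on-signature, inner-on-$<$ induction has no hook. Without refining the inner induction to track the second component as well, the argument does not close.
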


\section{Specializations} \label{sec_specializations}

In this section, we focus on specializing the generalized criterion to F5 and GVW's criteria by using appropriate admissible partial orders. By saying ``specialize", we mean the critical pairs rejected by F5 or GVW's criteria can also be rejected by the generalized criterion.

% In each subsubsection, we first list the criteria in each algorithm, then choose an appropriate admissible partial order,
% and show how the generalized criterion specialize to these criteria at last.

\subsection{F5's criteria}

First, we list the F5's criteria with current notations. In F5, the order $\prec_2$ on $R^m$ is obtained by extending $\prec_1$ to $R^m$ in a {\em position over term} fashion, i.e.
$$x^\alpha\e_i \prec_2 x^\beta\e_j    \mbox{  iff  } \left\{\begin{array}{l} i > j, \\ \mbox{ or } \\ i = j \mbox{ and } x^\alpha \prec_1 x^\beta. \end{array}\right.$$
This term order makes F5 work incrementally.

\begin{define}[syzygy criterion]
Let $B$ be a subset of $I$, $\fu$ be a polynomial in $B$ and $t$ be a power product in $R$ where $f\not=0$ and $\lpp(\u) = x^\alpha \e_i$. We say $t(\fu)$ is {\bf F5-divisible} by $B$, if there exists $\gv \in B$ with $\lpp(\v)=x^\beta \e_j$, such that
\begin{enumerate}

\item $\lpp(g)$ divides $tx^\alpha$, and

\item $\e_i \succ \e_j$.

\end{enumerate}
\end{define}

\begin{define}[rewritten criterion]
Let $B$ be a subset of $I$, $\fu$ be a polynomial in $B$ and $t$ be a power product in $R$. We say $t(\fu)$ is {\bf F5-rewritable} by $B$, if there exists $\gv \in B$, such that
\begin{enumerate}

\item $\lpp(\v)$ divides $\lpp(t\u)$, and

\item $\gv$ is added to $B$ later than $\fu$.

\end{enumerate}
\end{define}

In F5, a critical pair $(t_f, \fu, t_g, \gv)$ of $B$ is rejected by the syzygy criterion or rewritten criterion if either $t_f(\fu)$ or $t_g(\gv)$ is F5-divisible or F5-rewritable by $B$.

Next, we show how the generalized criterion specializes to both syzygy criterion and rewritten criterion at the same time. For this purpose, the following partial order on $G$, which can be updated automatically when a new element is added to $G$, is used: For any $\fu, \gv\in G$, we say $\gv < \fu$ if
\begin{enumerate}

\item $f\not=0$ and $\gv = 0^{[\v]}$ is a {\em principle syzygy polynomial},

\item otherwise, $\gv$ is added to $G$ later than $\fu$.

\end{enumerate}
The above partial order ``$<$" is {\em admissible} in the algorithm AGC. Because for any critical pair $(t_f, \fu, t_g, \gv)$ of $G$, when we need to reduce its S-polynomial to $\hw$ by $G$, the polynomial $\hw$ is always added to $G$ later than $\fu$ no matter $h$ is $0$ or not, since $\fu$ is already in $G$.

At last, we show how the generalized criterion specializes to the rewritten criterion and
syzygy criterion. For the rewritten criterion, the specialization is obvious by the definition
of ``$<$". For the syzygy criterion, if $t(\fu)$, where $\fu\in G$ with $\lpp(\u) = x^\alpha \e_i$ and $f \not= 0$, is F5-divisible by some $\gv \in G$ with $\lpp(\v)=x^\beta \e_j$, we have $\lpp(g)$
divides $tx^\alpha$ and $\e_i \succ \e_j$. Since $\gv \in G$, ccording to the algorithm AGC, the principle syzygy polynomial $0^{[g\e_i - f_i\v]}$ has been added to $G$, and $\lpp(g\e_i - f_i\v) = \lpp(g)\e_i$ divides $t x^\alpha \e_i$. So $t(\fu)$ is gen-rewritable by $0^{[g\e_i - f_i\v]} \in G$.
Therefore, the critical pairs rejected by F5's criteria can also be rejected by the generalized criterion.

With a similar discussion, the generalized criterion can also specialize to the criteria in \citep{Ars09}, since the extended F5 algorithm in that paper only differs from the original F5 in the order $\prec_2$ on $R^m$.

\subsection{GVW's Criteria}

First, we rewrite the GVW's criteria with current notations.

\begin{define}[First Criterion]
Let $B$ be a subset of $I$, $\fu$ be a polynomial in $B$ and $t$ be a power product in $R$ where $f\not=0$. We say $t(\fu)$ is {\bf GVW-divisible} by $B$, if there exists $\gv \in B$ such that
\begin{enumerate}

\item $\lpp(\v)$ divides $\lpp(t\u)$, and

\item $g=0$.

\end{enumerate}
\end{define}

\begin{define}[Second Criterion]
Let $B$ be a subset of $I$, $\fu$ be a polynomial in $B$ and $t$ be a power product in $R$. We say $t(\fu)$ is {\bf eventually super top-reducible} by $B$, if $t(\fu)$ is reducible and can be reduced to $\hw$ by $B$, and there exists $\gv \in B$ such that
\begin{enumerate}

\item $\lpp(\v)$ divides $\lpp(\w)$,

\item $\lpp(g)$ divides $\lpp(h)$, and

\item $\frac{\lpp(\w)}{\lpp(\v)} = \frac{\lpp(h)}{\lpp(g)}$ and $\frac{\lc(\w)}{\lc(\v)} = \frac{\lc(h)}{\lc(g)}$.

\end{enumerate}
\end{define}

In GVW, a critical pair $(t_f, \fu, t_g, \gv)$ of $B$ is rejected, if $t_f(\fu)$ is GVW-divisible or eventually super top-reducible by $B$. The GVW algorithm also has a third criterion.

\vh\noindent{\bf Third Criterion}
{\em If there are two critical pairs $(t_f, \fu, t_g, \gv)$ and $({t}_{\brf}, \brfu, {t}_{\brg}, \brgv)$ of $B$ such that $t_f(\fu)$ and ${t}_{\brf}(\brfu)$ have the same signature, i.e. $\lpp(t_f\u) = \lpp({t}_{\brf} \bar{\u})$, then at least one of the two critical pairs is redundant.}
\vh

Next, in order to specialize the generalized criterion to the above three criteria at the same time, the following partial order on $G$, which can also be updated automatically when a new element is added to $G$, is used: for any  $\fu, \gv \in G$, we say $\gv < \fu$ if one of the following two conditions holds:
\begin{enumerate}

\item  $\lpp(t'g) < \lpp(tf)$, where $t'= \frac{\lcm(\lpp(\u), \lpp(\v))}{\lpp(\v)}$ and $t= \frac{\lcm(\lpp(\u), \lpp(\v))}{\lpp(\u)}$ such that $t(\fu)$ and $t'(\gv)$ have the same signature, i.e. $\lpp(t\u) = \lpp(t'\v)$.

\item $\lpp(t'g) = \lpp(tf)$ and $\gv$ is added to $G$ later than $\fu$.

\end{enumerate}
The above partial order ``$<$" is {\em admissible} in the algorithm AGC. Because for any critical pair $(t_f, \fu, t_g, \gv)$ of $G$, when we need to reduce its S-polynomial to $\hw$ by $G$, we always have $\lpp(t_f\u) = \lpp(\w)$ and $\lpp(t_ff) > \lpp(h)$.

At last, let us see the three criteria of GVW.

For the first criterion, if $t(\fu)$ is GVW-divisible by some $\gv \in G$, then $t(\fu)$ is also gen-rewritable by $\gv \in G$ by definition.

For the second criterion, if $t(\fu)$, where $\fu \in G$, is eventually super top-reducible by $G$, then $t(\fu)$ can be reduced to $\hw$ and there exists $\gv \in G$ such that $\lpp(\v)$ divides $\lpp(\w)$, $\lpp(g)$ divides $\lpp(h)$, $\frac{\lpp(\w)}{\lpp(\v)} = \frac{\lpp(h)}{\lpp(g)}$ and $\frac{\lc(\w)}{\lc(\v)} = \frac{\lc(h)}{\lc(g)}$. Then we have $\lpp(t'g) = \lpp(h) < \lpp(tf)$ and $\lpp(t'\v) = \lpp(\w) = \lpp(t\u)$, which means $\gv < \fu$. So $t(\fu)$ is gen-rewritable by $\gv \in G$.

For the third criterion, we have $\lpp(t_f\u) = \lpp({t}_{\brf}\bar{\u})$. Note that the above partial order is in fact a total order. First, if $\fu < \brfu$, then ${t}_{\brf}(\brfu)$ is gen-rewritable by $\fu$ and hence $({t}_{\brf}, \brfu, {t}_{\brg}, \brgv)$ is rejected; the reverse is also true. Second, if $\fu = \brfu$, then one of the two critical pairs should be selected earlier from the set $CPairs$, assuming $(t_f, \fu, t_g, \gv)$ is selected first. On one hand, if $(t_f, \fu, t_g, \gv)$ is regular and not gen-rewritable, then its S-polynomial is reduced to $\hw$ and $\hw$ is added to $G$ by the algorithm AGC. Since ``$<$" is admissible, we have $\hw < \fu$. Thus, when the critical pair $({t}_{\brf}, \brfu, {t}_{\brg}, \brgv)$ is selected afterwards, it will be rejected, since ${t}_{\brf}(\brfu)$ is gen-rewritable by $\hw$. On the other hand, if $(t_f, \fu, t_g, \gv)$ is not regular, or it is regular and gen-rewritable, then $(t_f, \fu, t_g, \gv)$ is rejected at once. Anyway, at least one of the two critical pairs is rejected in the algorithm.

%%%%%%%%%%%%%%%%%%%%%%%%%%%%%%%%%%%%%%%%%%%%%%%6.23m%%%%%%%%%%%%%%%%%%%%%%%%%%%%%%%%%%%%%%%%%%%%%%%

\section{Proofs for the Correctness of the Generalized Criterion} \label{sec_proof}

To prove Theorem \ref{thm_main}, we need the following definition and lemmas. 

In this section, we always assume that $I$ is the ideal generated by $\{f_1, \cdots, f_m\}$. Let $\fu\in I$, we say $\fu$ has a {\bf standard representation} w.r.t. a set $B\subset I$, if there exist $p_1, \cdots, p_s \in R$ and $g_1^{[\v_1]}, \cdots, g_s^{[\v_s]}\in B$ such that $$f = p_1 g_1 + \cdots + p_s g_s,$$ where $\lpp(f)\succeq \lpp(p_ig_i)$ and $\fu$'s signature $\succeq$ $p_i(g_i^{[\v_i]})$'s signature, i.e. $\lpp(\u) \succeq \lpp(p_i\v_i)$ for $i=1,\cdots, s$. Clearly, if $\fu$ has a  standard representation w.r.t. $B$, then there exists $\gv\in B$ such that $\lpp(g)$ divides $\lpp(f)$ and $\lpp(\u) \succeq \lpp(t\v)$ where $t=\lpp(f)/\lpp(g)$. We call this property to be the {\bf basic property} of standard representations.

\begin{lemma} \label{lem_stdrepresentation}
Let $G$ be a finite subset of $I$ and $\{f_1^{[\e_1]},$ $\cdots, f_m^{[\e_m]}\}\subset G$. For a polynomial $\fu \in I$, $\fu$ has a standard representation w.r.t. $G$, if for any critical pair $[\gv, \hw]=(t_g, \gv, t_h, \hw)$ of $G$ with $\fu$'s signature $\succeq$ $t_g(\gv)$'s signature, i.e. $\lpp(\u) \succeq \lpp(t_g \v)$, the S-polynomial of $[\gv, \hw]$  always has a standard representation w.r.t. $G$.
\end{lemma}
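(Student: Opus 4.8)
The plan is to prove this by Noetherian induction on the signature $\lpp(\u)$ of $\fu$, using the term order $\prec_2$ on $R^m$ (which is a well-order on power products). Fix $\fu\in I$ with the hypothesis that every critical pair $[\gv,\hw]$ of $G$ whose S-polynomial has signature $\preceq \lpp(\u)$ admits a standard representation w.r.t. $G$, and assume the lemma holds for all polynomials in $I$ with strictly smaller signature.

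First I would dispose of the trivial case $f=0$: then $\fu$ has the empty standard representation. Otherwise, I would start from \emph{some} representation $f = \sum_{i} p_i g_i$ with $g_i^{[\v_i]}\in G$ (which exists because $\{f_1^{[\e_1]},\dots,f_m^{[\e_m]}\}\subseteq G$ generates $I$), chosen so that $T:=\max_i \lpp(p_i\v_i)$ is $\prec_2$-minimal among all such representations, and among those with this minimal $T$, so that the number of indices $i$ achieving $\lpp(p_i\v_i)=T$ is smallest. If $\lpp(\u)\succeq T$ \emph{and} $\lpp(f)\succeq\lpp(p_ig_i)$ for all $i$, we are done. So the argument splits into handling two obstructions: (a) $T\succ\lpp(\u)$, i.e. the signatures in the representation exceed the signature of $\fu$; and (b) $\lpp(f)\prec\lpp(p_ig_i)$ for some $i$ (leading-power-product cancellation), even when signatures behave.

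For obstruction (a), if $T\succ\lpp(\u)$ then at least two terms must cancel in signature at level $T$ (otherwise $\lpp(\u)$ would equal $T$), so there exist indices $j,k$ with $\lpp(p_j\v_j)=\lpp(p_k\v_k)=T$; picking the term in $p_j,p_k$ attaining this, we get a power-product multiple of the critical pair $[g_j^{[\v_j]},g_k^{[\v_k]}]$ whose S-polynomial has signature $\prec_2 T$. Rewriting $p_jg_j + p_kg_k$ via this S-polynomial — here I would invoke the hypothesis that this S-polynomial has a standard representation w.r.t. $G$ (note its signature is $\prec_2 T$, but I need it $\preceq\lpp(\u)$ to apply the hypothesis directly; if the S-polynomial's signature is still $\succ\lpp(\u)$ one iterates, always strictly decreasing $T$ or the count, until it drops to $\preceq\lpp(\u)$) — produces a new representation of $f$ with either smaller $T$ or fewer top-level terms, contradicting minimality. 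Similarly, obstruction (b) with signatures already well-behaved ($T\preceq\lpp(\u)$) but $\lpp(f)\prec\max_i\lpp(p_ig_i)=:D$ forces leading-term cancellation among the $p_ig_i$ at degree $D$; the corresponding critical pair has S-polynomial with signature $\preceq\lpp(\u)$ (since it is dominated by the signatures of the summands involved), so the standard-representation hypothesis applies, and rewriting again decreases $D$ or the number of top terms — contradiction.

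The main obstacle I anticipate is bookkeeping the interaction between the two measures (the signature bound $T$ versus $\lpp(\u)$, and the degree bound $D$ versus $\lpp(f)$) so that the rewriting steps are guaranteed to terminate: one must order the reduction so that fixing signature cancellation never reintroduces a worse degree cancellation at a signature we can no longer control, and vice versa. The clean way to organize this is a lexicographic descent on the pair $(T,\ \#\{i:\lpp(p_i\v_i)=T\})$ first to force $T\preceq\lpp(\u)$, and only then a separate lexicographic descent on $(D,\ \#\{i:\lpp(p_ig_i)=D\})$ to force $D=\lpp(f)$ — checking at each stage that the critical-pair S-polynomials invoked genuinely have signature $\preceq\lpp(\u)$ so the hypothesis (or, for the strictly-smaller-signature case, the outer induction) is applicable. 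This is essentially the signature-aware analogue of Buchberger's standard-representation argument, and the term order $\prec_2$ being a well-order is what makes the descent legitimate.
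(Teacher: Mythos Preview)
Your treatment of obstruction (a) contains a genuine error. You claim that if $T=\max_i\lpp(p_i\v_i)\succ\lpp(\u)$ then ``at least two terms must cancel in signature at level $T$ (otherwise $\lpp(\u)$ would equal $T$)'', but the identity $f=\sum_i p_ig_i$ is only a \emph{polynomial} equality --- nothing forces $\u=\sum_i p_i\v_i$, so a single summand may perfectly well have signature strictly above $\lpp(\u)$ with no cancellation at all. Even granting two summands at signature level $T$, the critical pair $[g_j^{[\v_j]},g_k^{[\v_k]}]$ is built from $\lcm(\lpp(g_j),\lpp(g_k))$, not from signatures; its S-polynomial has no reason to have signature $\prec T$, it is not a signature-cancelling combination of $g_j$ and $g_k$, and the lemma's hypothesis (which speaks only of leading-power-product S-polynomials with signature $\preceq\lpp(\u)$) gives you no purchase on it.

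The good news is that obstruction (a) is in fact vacuous, and recognising this is exactly the paper's opening move: write $\u=(p_1,\dots,p_m)$ and take the canonical representation $f=\sum_{i=1}^m p_if_i$ with $f_i^{[\e_i]}\in G$; then $\lpp(p_i\e_i)\preceq\lpp(\u)$ holds trivially, since every monomial of $p_i\e_i$ is already a monomial of $\u$. (Equivalently, your minimal $T$ is automatically $\preceq\lpp(\u)$ because this representation witnesses it --- but you never make that observation.) With the signature bound in hand from the outset, only the leading-power-product cancellation (your obstruction (b)) remains, and each rewriting via the hypothesised standard representation of an S-polynomial preserves the bound. Your handling of (b) is correct and is essentially the paper's argument; the outer Noetherian induction on $\lpp(\u)$ and the two-phase descent are then superfluous.
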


\begin{proof}
For $\fu\in I$, we have $\u\cdot \f = f$ where $\f = (f_1, \cdots, f_m)\in R^m$. Assume $\u=p_1\e_1+\cdots+p_m\e_m$  where $p_i\in R$. Clearly, $f = p_1 f_1 + \cdots + p_m f_m.$ Note that $\lpp(\u)\succeq \lpp(p_i\e_i)$ for $i=1,\cdots,m$. If $\lpp(f)\succeq \lpp(p_if_i)$, then we have already got a standard representation for $\fu$ w.r.t. $G$. Otherwise, we will prove it through classical method. Let $T := \max\{\lpp(p_if_i)\mid i=1,\cdots,m\}$, then $T\succ \lpp(f)$ holds by assumption.
Consider the equation
$$f= \sum_{\lpp(p_if_i)=T} \lc(p_i)\lpp(p_i) f_i +\sum_{\lpp(p_jf_j)\prec T}p_j f_j+ \sum_{\lpp(p_if_i)=T} (p_i - \lc(p_i)\lpp(p_i)) f_i.\eqno(1)$$
The leading power products in the first sum should be canceled, since we have $T\succ \lpp(f)$. So the first sum can be rewritten as a sum of S-polynomials, that is $$\sum_{\lpp(p_if_i)=T}\lc(p_i)\lpp(p_i) f_i= \sum \bar{c}t(t_g g-ct_h h),$$
 where $\gv, \hw\in G$, $\bar{c}\in \k$, $t_g(\gv)-ct_h(\hw)$ is the S-polynomial of $(t_g, \gv, t_h, \hw)$, $\lpp(t\ t_g g)=\lpp(t\ t_h h)=T$ and $\lpp(\u)\succeq \lpp(t\ t_g \v)\succeq \lpp(t\ t_h\w)$ such that we have $\lpp(t ( t_g g - c  t_h h)) \prec T $. By the hypothesis of the lemma, the S-polynomial $t_g(\gv)-ct_h(\hw) = (t_g g-c t_h h)^{[t_g\v-c t_h \w]}$ has a standard representation w.r.t. $G$, that is, there exist $g_i^{[\v_i]} \in G$, such that $ t_g g-c t_h h = \sum q_i g_i$, where $\lpp(t_g g-c t_h h)\succeq \lpp(q_i g_i)$ and $\lpp(\u)\succeq \lpp(t\ t_g\v)\succeq \lpp(t\ q_i \v_i)$. Substituting these standard representations back to the original expression of $f$ in $(1)$, we get a new representation for $f$. Let $T^{(1)}$ be the maximal leading power product of the polynomials appearing in the right side of the new representation. Then we have $T\succ T^{(1)}$.  Repeat the above process until  $T^{(s)}$ is the same as $\lpp(f)$ for some $s$ after finite steps. Finally, we always get a standard representation for $\fu$.
\end{proof}

\begin{lemma} \label{lem_correctness}
Let $G$ be a finite subset of $I$ and $\{f_1^{[\e_1]},$ $\cdots, f_m^{[\e_m]}\}\subset G$. Then $G$ is a labeled Gr\"obner basis for $I$, if for any critical pair $[\fu, \gv]$ of $G$, the S-polynomial of $[\fu, \gv]$   always has a standard representation w.r.t. $G$.
\end{lemma}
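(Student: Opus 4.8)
The plan is to reduce this lemma to the definition of labeled Gr\"obner basis directly, using Lemma \ref{lem_stdrepresentation} together with the \emph{basic property} of standard representations. Take an arbitrary $\fu\in I$ with $f\not=0$; I must produce $\gv\in G$ with $\lpp(g)\mid\lpp(f)$ and $\lpp(t\v)\preceq\lpp(\u)$ for $t=\lpp(f)/\lpp(g)$. The idea is to show that $\fu$ has a standard representation w.r.t. $G$, and then invoke the basic property, which is exactly the conclusion we want. So the work is entirely in verifying the hypothesis of Lemma \ref{lem_stdrepresentation}: for every critical pair $[\gv,\hw]=(t_g,\gv,t_h,\hw)$ of $G$ with $\lpp(\u)\succeq\lpp(t_g\v)$, the S-polynomial of $[\gv,\hw]$ has a standard representation w.r.t. $G$. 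But this is granted in the present lemma's hypothesis — it says \emph{every} critical pair of $G$ has a standard-representation S-polynomial, with no signature restriction — so the hypothesis of Lemma \ref{lem_stdrepresentation} is satisfied a fortiori for any choice of $\fu$. Hence Lemma \ref{lem_stdrepresentation} applies and $\fu$ has a standard representation.

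In detail, the argument runs: Fix $\fu\in I$ with $f\not=0$. For any critical pair $[\gv,\hw]$ of $G$ satisfying $\lpp(\u)\succeq\lpp(t_g\v)$, by hypothesis its S-polynomial has a standard representation w.r.t. $G$; this is precisely the premise required by Lemma \ref{lem_stdrepresentation}. Therefore $\fu$ has a standard representation w.r.t. $G$. By the basic property of standard representations, there exists $\gv\in G$ with $\lpp(g)$ dividing $\lpp(f)$ and $\lpp(t\v)\preceq\lpp(\u)$ where $t=\lpp(f)/\lpp(g)$. Since $\fu$ was arbitrary, $G$ is a labeled Gr\"obner basis for $I$ by definition. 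One should also remark that the assumption $\{f_1^{[\e_1]},\dots,f_m^{[\e_m]}\}\subseteq G$ is needed only to invoke Lemma \ref{lem_stdrepresentation} (it uses it to start the rewriting from the module representation $\u=p_1\e_1+\cdots+p_m\e_m$), and is carried along transparently.

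The only subtlety — and the step I'd expect to need care — is matching the quantifier in the hypothesis of the present lemma with the quantifier demanded by Lemma \ref{lem_stdrepresentation}. Lemma \ref{lem_stdrepresentation} requires standard representations only for critical pairs whose signature is bounded by $\lpp(\u)$; the present lemma hands us standard representations for \emph{all} critical pairs of $G$. So the containment of hypotheses is in the right direction and there is no genuine obstacle, just the need to state this bookkeeping cleanly. The proof is therefore very short — it is essentially an application of the previous lemma plus the basic property — and I would write it in just a few lines, taking care to note that $f\not=0$ is handled (the $f=0$ case being vacuous in the definition of labeled Gr\"obner basis).
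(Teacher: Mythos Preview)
Your proposal is correct and follows exactly the same approach as the paper: apply Lemma~\ref{lem_stdrepresentation} (whose hypothesis is trivially implied by the present lemma's stronger hypothesis) to conclude every $\fu\in I$ has a standard representation, then invoke the basic property of standard representations. The paper's proof is just two sentences to this effect; your only addition is the explicit bookkeeping about the quantifier containment, which is harmless.
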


\begin{proof}
Using Lemma \ref{lem_stdrepresentation}, for any $\fu\in I$, $\fu$ has a standard representation w.r.t. $G$. By the basic property of standard representations, $G$ is a labeled Gr\"obner basis for $I$.
\end{proof}

Before giving a full proof of Theorem \ref{thm_main}, we introduce the following definitions.

%\begin{define}
Suppose $(t_f, \fu, t_g, \gv)$ and $({t}_{\brf}, \brfu, {t}_{\brg}, \brgv)$ are two critical pairs, we say $({t}_{\brf}, \brfu, {t}_{\brg}, \brgv)$ is {\bf smaller} than $(t_f, \fu$, $t_g, \gv)$  if one of the following conditions holds:
\begin{enumerate}

\item[(a).] $\lpp({t}_{\brf} \bru) \prec \lpp(t_f\u)$.

\item[(b).] $\lpp({t}_{\brf} \bru) = \lpp(t_f\u)$ and $\brfu < \fu$.

\item[(c).] $\lpp({t}_{\brf} \bru) = \lpp(t_f\u)$, $\brfu = \fu$ and $\lpp({t}_{\brg} \brv) \prec \lpp(t_g\v)$.

\item[(d).] $\lpp({t}_{\brf} \bru) = \lpp(t_f\u)$, $\brfu = \fu$, $\lpp({t}_{\brg} \brv) = \lpp(t_g\v)$ and $\brgv < \gv$.

\end{enumerate}
%\end{define}

Let $D$ be a set of critical pairs. A critical pair in $D$ is said to be  {\bf minimal}  if there is no critical pair in $D$  smaller  than this critical pair. Remark that the order ``smaller" defined on the critical pairs is {\em a partial order}, i.e. some critical pairs may not be comparable. Thus, the minimal critical pair in $D$ may not be unique, but we can always find one if $D$ is finite.

Now, we give the proof of Theorem \ref{thm_main}.

\begin{proof}[Proof of Theorem \ref{thm_main}] Let $G_{end}$ denote the set returned by the algorithm AGC. According to the hypotheses, $G_{end}$ is finite, and we also have $\{f_1^{[\e_1]}, \cdots, f_m^{[\e_m]}\} \subset G_{end}$ by the algorithm AGC. To show $G_{end}$ is a labeled \gr basis for $I$, we will take the following strategy.\vh \\
{\bf Step 1:}  Let $Todo$ be the set of  {\em all} the critical pairs of $G_{end}$, and $Done$ be an empty set.\vh \\
{\bf Step 2:} Select a minimal critical pair $[\fu, \gv]=(t_f, \fu, t_g, \gv)$ in $Todo$. \vh \\
{\bf Step 3:} For such $[\fu, \gv]$,  we will prove the following two facts.
\begin{enumerate}
\item[(F1).] The  S-polynomial of  $[\fu, \gv]$    has a standard representation w.r.t. $G_{end}$.

\item[(F2).] If $(t_f, \fu, t_g, \gv)$ is {\em super regular} or {\em regular}, then $t_f(\fu)$ is gen-rewritable by $G_{end}$.
\end{enumerate}
{\bf Step 4:} Move $[\fu, \gv]$ from $Todo$ to $Done$, i.e. $Todo \lla Todo \setminus \{[\fu, \gv]\}$ and $Done \lla Done\ \cup \{ [\fu, \gv]\}$. \vh \\
We can repeat {\bf Step 2, 3, 4} until $Todo$ is empty. Please note that for every critical pair in $Done$, it always has property (F1); particularly, if this critical pair is super regular or regular, then it has properties  (F1) and (F2). When $Todo$ is empty, all the critical pairs of $G_{end}$ will lie in $Done$, and hence, all the corresponding S-polynomials  have standard representations w.r.t. $G_{end}$. Then $G_{end}$ is a labeled Gr\"obner basis by Lemma \ref{lem_correctness}.

{\bf Step 1, 2, 4} are trivial, so we next focus on showing the facts in {\bf Step 3}.

Take a minimal critical pair $[\fu, \gv] = (t_f, \fu, t_g, \gv)$ in $Todo$. And this critical pair must appear in the algorithm AGC. Suppose such pair is selected from the set $CPairs$ in  some  loop of the algorithm AGC and $G_k$ denotes the corresponding set $G$ at the beginning of that loop. For such $[\fu, \gv]$, it must be in one of the following cases:
\begin{enumerate}

\item[C1:] $[\fu, \gv]$ is {\em non-regular}.

\item[C2:] $[\fu, \gv]$ is {\em super regular}.

\item[C3:] $[\fu, \gv]$ is {\em regular} and is {\em not} gen-rewritable by $G_k$.

\item[C4:] $[\fu, \gv]$ is {\em regular} and $t_f(\fu)$ is gen-rewritable by $G_k$.

\item[C5:] $[\fu, \gv]$ is {\em regular} and $t_g(\gv)$ is gen-rewritable by $G_k$.

\end{enumerate}
Thus, to show the facts in {\bf Step 3}, we have two things to do: First, show (F1) holds in case {\bf C1}; Second, show both (F1) and (F2) hold in cases {\bf C2, C3, C4} and {\bf C5}.

We  make the following claims under the condition that $[\fu, \gv] = (t_f, \fu, t_g, \gv)$ is  minimal in $Todo$. The proofs of these claims will be presented after the current proof.

{\bf Claim 1}: For any $\brfu\in I$, if $\brfu$'s signature $\prec$ $t_f(\fu)$'s signature, i.e. $\lpp(\bru) \prec \lpp(t_f\u)$, then $\brfu$ has a standard representation w.r.t. $G_{end}$.

{\bf Claim 2}: If $[\fu, \gv]$ is super regular or regular and $t_f(\fu)$ is gen-rewritable by $G_{end}$, then the S-polynomial of $[\fu, \gv]$ has a standard representation w.r.t. $G_{end}$.

{\bf Claim 3:} If $[\fu, \gv]$ is regular and $t_g(\gv)$ is gen-rewritable by  $G_{end}$, then $t_f(\fu)$ is also gen-rewritable by $G_{end}$.

Note that {\bf Claim 2} plays an important role in the whole proof. Since {\bf Claim 2} shows that (F2) implies (F1) in the cases {\bf C2, C3, C4} and {\bf C5}, it suffices to show $t_f(\fu)$ is gen-rewritable by $G_{end}$ in these cases.

Next, we proceed with each case respectively.

{\bf C1:} $[\fu, \gv]$ is {\em non-regular}. Consider the S-polynomial $t_f(\fu) - c t_g (\gv) = (t_f f-c t_g g)^{[t_f\u-ct_g\v]}$ where $c=\lc(f)/\lc(g)$. Note that $\lpp(t_f\u-c t_g\v) \prec \lpp(t_f\u)$ by the definition of non-regular, so {\bf Claim 1} shows $(t_f f-c t_g g)^{[t_f\u-ct_g\v]}$ has a standard representation w.r.t. $G_{end}$, which proves (F1).

{\bf C2:} $[\fu, \gv]$ is {\em super regular}, i.e. $\lpp(t_f\u - ct_g\v) = \lpp(t_f\u) = \lpp(t_g\v)$ where $c=\lc(f)/\lc(g)$. Let $\bar{c} := \lc(\u)/\lc(\v)$. Note that $\bar{c}\not=c$, since $\lpp(t_f\u-c t_g \v)=\lpp(t_f\u)$. Then we have $\lpp(t_f f-\bar{c} t_g g)=\lpp(t_f f)$ and $\lpp(t_f \u-\bar{c} t_g \v)\prec \lpp(t_f\u)$. So {\bf Claim 1} shows $t_f(\fu)- \bar{c} t_g (\gv) = (t_f f-\bar{c} t_g g)^{[t_f\u-\bar{c} t_g \v]}$ has a standard representation w.r.t. $G_{end}$, and hence, there exists $\hw\in G_{end}$ such that $\lpp(h)$ divides $\lpp(t_f f - \bar{c} t_g g)=\lpp(t_f f)$ and $\lpp(t_f \u)\succ \lpp(t_f \u-\bar{c} t_g \v) \succeq \lpp(t_h \w)$ where $t_h= \lpp(t_f f)/\lpp(h)$. Next, consider the critical pair $[\fu, \hw]$. Since $\lpp(t_f f) = \lpp(t_h h)$, the critical pair $[\fu, \hw]$ has two possible forms. \smallskip\\
{Form 1:} $[\fu, \hw] = (t_f, \fu, t_h, \hw)$. Since $\lpp(t_f \u) = \lpp(t_g\v) \succ \lpp(t_h \w)$, the critical pair $[\fu, \hw]$ is regular and is smaller than $(t_f, \fu, t_g, \gv)$ in fashion (c), which means $[\fu, \hw]$ lies in $Done$ and $t_f(\fu)$ is gen-rewritable by $G_{end}$.
\smallskip\\
{Form 2:} $[\fu, \hw] = (\bar{t}_f, \fu, \bar{t}_h, \hw)$ where $\bar{t}_f$ divides $t_f$ and $\bar{t}_f\not= t_f$. Since $\lpp(t_f \u)\succ \lpp(t_h \w)$, the critical pair $(\bar{t}_f, \fu, \bar{t}_h, \hw)$ is also regular and is smaller than $(t_f, \fu, t_g, \gv)$ in fashion (a), which means $(\bar{t}_f, \fu, \bar{t}_h, \hw)$ lies in $Done$ and $\bar{t}_f(\fu)$ is gen-rewritable by $G_{end}$. Then ${t}_f(\fu)$ is also gen-rewritable by $G_{end}$, since $\bar{t}_f$ divides $t_f$.
\smallskip

{\bf C3:} $[\fu, \gv]$ is {\em regular} and {\em not} gen-rewritable by $G_k$. According to the algorithm AGC, the S-polynomial $t_f(\fu)-c t_g(\gv)$ is reduced to $\hw$ by $G_k$ where $c=\lc(f)/\lc(g)$, and $\hw$ will be added to the set $G_k$ at the end of this loop. Note that $G_k\subset G_{end}$ and $\hw\in G_{end}$. Since ``$<$" is an admissible partial order, we have $\hw<\fu$ by definition. Combined with the fact $\lpp(\w) = \lpp(t_f\u)$, so ${t}_f(\fu)$ is gen-rewritable by $\hw\in G_{end}$.

{\bf C4}: $[\fu, \gv]$ is {\em regular} and $t_f(\fu)$ is gen-rewritable by $G_k$. Then $t_f(\fu)$ is also gen-rewritable by $G_{end}$, since $G_k\subset G_{end}$.

{\bf C5}: $[\fu, \gv]$ is {\em regular} and $t_g(\gv)$ is gen-rewritable by $G_k$.  $t_g(\gv)$ is also gen-rewritable by $G_{end}$, since $G_k \subset G_{end}$. Then {\bf Claim 3} shows $t_f(\fu)$ is gen-rewritable by $G_{end}$ as well.

Theorem \ref{thm_main} is proved.
\end{proof}

We give the proofs for the three claims below.

\begin{proof}[Proof of {\bf Claim 1}]
According to the hypothesis, we have $\brfu\in I$ and $\lpp(\bru)\prec \lpp(t_f\u)$. So for any critical pair $(t_{f'}, {f'}^{[\u']}, t_{g'}, {g'}^{[\v']})$ of $G_{end}$ with $\lpp(\bru) \succeq \lpp(t_{f'} \u')$, the critical pair $(t_{f'}, {f'}^{[\u']}$, $t_{g'}, {g'}^{[\v']})$ is smaller than $(t_f, \fu, t_g, \gv)$ in fashion (a) and hence lies in $Done$, which means the S-polynomial of $(t_{f'}, {f'}^{[\u']}, t_{g'}, {g'}^{[\v']})$ has a standard representation w.r.t. $G_{end}$. So Lemma \ref{lem_stdrepresentation} shows that $\brfu$ has a standard representation w.r.t. $G_{end}$.
\end{proof}

\begin{proof}[Proof of {\bf Claim 2}]
We have that $[\fu, \gv] = (t_f, \fu, t_g, \gv)$ is minimal in $Todo$ and $t_f(\fu)$ is gen-rewritable by $G_{end}$. Let $c := \lc(f)/\lc(g)$. Then $\brfu = t_f (\fu) - c t_g (\gv) = (t_f f - c t_g g)^{[t_f\u - c t_g \v]}$ is the S-polynomial of $[\fu, \gv]$. Since $[\fu, \gv]$ is super regular or regular, we have $\lpp(\bru) = \lpp(t_f \u)$. Next we will show that $\brfu$ has a standard representation w.r.t. $G_{end}$. The proof is organized as follows. \vh\\
{\bf First:} We show that there exists $f_0^{[\u_0]}\in G_{end}$ such that $f_0^{[\u_0]} < \fu$, $t_f(\fu)$ is gen-rewritable by $f_0^{[\u_0]}$ and $t_0(f_0^{[\u_0]})$ is {\em not} gen-rewritable by $G_{end}$ where $t_0 = \lpp(t_f\u)/\lpp(\u_0)$.\vh \\
{\bf Second:} For such $f_0^{[\u_0]}$, we show that $\lpp(\brf) \succeq \lpp(t_0f_0)$ where $t_0 = \lpp(t_f\u)/\lpp(\u_0)$.\vh\\
{\bf Third:} We prove that $\brfu$ has a standard representation w.r.t. $G_{end}$.\vh

Proof of the {\bf First} fact. By hypothesis, suppose $t_f(\fu)$ is gen-rewritable by some $f_1^{[\u_1]}\in G_{end}$, i.e. $\lpp(\u_1)$ divides $\lpp(t_f\u)$ and $f_1^{[\u_1]} < \fu$. Let $t_1 := \lpp(t_f \u) /\lpp(\u_1)$. If $t_1(f_1^{[\u_1]})$ is not gen-rewritable by $G_{end}$, then $f_1^{[\u_1]}$ is the polynomial we are looking for. Otherwise, there exists $f_2^{[\u_2]}\in G_{end}$ such that $t_1(f_1^{[\u_1]})$ is gen-rewritable by $f_2^{[\u_2]}$. Note that $t_f(\fu)$ is also gen-rewritable by $f_2^{[\u_2]}$ and we have $\fu > f_1^{[\u_1]} > f_2^{[\u_2]}$. Let $t_2 := \lpp(t_f \u) /\lpp(\u_2)$. We next discuss whether $t_2(f_2^{[\u_2]})$ is gen-rewritable by $G_{end}$. In the better case, $f_2^{[\u_2]}$ is the desired polynomial if $t_2(f_2^{[\u_2]})$ is not gen-rewritable by $G_{end}$; while in the worse case, $t_2(f_2^{[\u_2]})$ is gen-rewritable by some $f_3^{[\u_3]} \in G_{end}$. We can repeat the above discussions for the worse case. Finally, we will get a chain $\fu > f_1^{[\u_1]} > f_2^{[\u_2]} > \cdots$. This chain must terminate, since $G_{end}$ is finite and ``$>$" is a partial order defined on $G_{end}$. Suppose $f_s^{[\u_s]}$ is the last one in the above chain. Then $t_f(\fu)$ is gen-rewritable by $f_s^{[\u_s]}$ and $t_s(f_s^{[\u_s]})$ is not gen-rewritable by $G_{end}$ where $t_s = \lpp(t_f\u)/\lpp(\u_s)$.

Proof of the {\bf Second} fact. From the {\bf First} fact, we have that $t_0(f_0^{[\u_0]})$ is {\em not} gen-rewritable by $G_{end}$ where $t_0 = \lpp(t_f\u)/\lpp(\u_0)$. Next, we prove the {\bf Second} fact by contradiction. Assume $\lpp(\brf) \prec \lpp(t_0f_0)$. Let $c_0 := \lc(\bru)/\lc(\u_0)$. Then for the polynomial $\brfu - c_0 t_0 (f_0^{[\u_0]}) = (\brf - c_0 t_0 f_0)^{[\bru - c_0 t_0 \u_0]}$, we have $\lpp(\brf - c_0 t_0 f_0) = \lpp(t_0 f_0)$ and $\lpp(\bru - c_0 t_0 \u_0) \prec \lpp(\bru) = \lpp(t_0\u_0) = \lpp(t_f\u)$. So $(\brf - c_0 t_0 f_0)^{[\bru - c_0 t_0 \u_0]}$ has a standard representation w.r.t. $G_{end}$ by {\bf Claim 1}, and hence, there exists $\hw\in G_{end}$ such that $\lpp(h)$ divides $\lpp(\brf - c_0 t_0 f_0) = \lpp(t_0 f_0)$ and $\lpp(t_0\u_0) \succ \lpp(\bru - c_0 t_0 \u_0) \succeq \lpp(t_h\w)$ where $t_h=\lpp(t_0f_0)/\lpp(h)$. Next consider the critical pair $[f_0^{[\u_0]}, \hw]$. Similarly, since $\lpp(t_0f_0) = \lpp(t_h h)$, the critical pair $[f_0^{[\u_0]}, \hw]$ has two possible forms.
\smallskip\\
{Form 1:} $[f_0^{[\u_0]}, \hw] = (t_0, f_0^{[\u_0]}, t_h, \hw)$. Since $\lpp(t_0\u_0)\succ \lpp(t_h \w)$, the critical pair $[f_0^{[\u_0]}, \hw]$ is regular and is smaller than $(t_f, \fu, t_g, \gv)$ in fashion (b), which means $[f_0^{[\u_0]}, \hw]$ lies in $Done$ and $t_0(f_0^{[\u_0]})$ is gen-rewritable by $G_{end}$, which contradicts with the property that $t_0(f_0^{[\u_0]})$ is {\em not} gen-rewritable by $G_{end}$.
\smallskip\\
{Form 2:} $[f_0^{[\u_0]}, \hw] = (\bar{t}_0, f_0^{[\u_0]}, \bar{t}_h, \hw)$ where $\bar{t}_0$ divides $t_0$ and $\bar{t}_0 \not= t_0$. Since $\lpp(t_0\u_0) \succ \lpp(t_h \w)$, the critical pair $(\bar{t}_0, f_0^{[\u_0]}, \bar{t}_h, \hw)$ is also regular and is smaller than $(t_f, \fu, t_g, \gv)$ in fashion (a), which means $(\bar{t}_0, f_0^{[\u_0]}, \bar{t}_h, \hw)$ lies in $Done$ and $\bar{t}_0(f_0^{[\u_0]})$ is gen-rewritable by $G_{end}$. Then $t_0(f_0^{[\u_0]})$ is also gen-rewritable by $G_{end}$, since $\bar{t}_0$ divides $t_0$. This is also contradicts with the property that $t_0(f_0^{[\u_0]})$ is {\em not} gen-rewritable by $G_{end}$.
\smallskip\\
In either case, the {\bf Second} fact is proved.

Proof of the {\bf Third} fact. According to the second fact, we have $\lpp(\brf) \succeq \lpp(t_0f_0)$ where $t_0 = \lpp(t_f\u)/\lpp(\u_0)$. Let $c_0 := \lc(\bru)/\lc(\u_0)$.
For the polynomial $\brfu - c_0 t_0 (f_0^{[\u_0]}) = (\brf - c_0 t_0 f_0)^{[\bru - c_0 t_0 \u_0]}$, we have $\lpp(\brf - c_0 t_0 f _0)\preceq \lpp(\brf)$ and $\lpp(\bru - c_0 t_0 \u_0)\prec \lpp(\bru)$. So $(\brf - c_0 t_0 f_0)^{[\bru - c_0 t_0 \u_0]}$ has a standard representation w.r.t. $G_{end}$ by {\bf Claim 1}. Note that $\lpp(\brf)\succeq \lpp(t_0 f _0)$ and $\lpp(\bru)=\lpp(t_0 \u_0)$. So after adding $c_0 t_0 f_0$ to both sides of the standard representation of $\brfu - c_0 t_0 (f_0^{[\u_0]})$, then we will get a standard representation of $\brfu$ w.r.t. $G_{end}$.
\end{proof}

%%%%%%%%%%%%%%%%%%%%%%%%%%%%%%%%%%%%%%%%%%%%%%%6.23a%%%%%%%%%%%%%%%%%%%%%%%%%%%%%%%%%%%%%%%%%%%%%%%

\begin{proof}[Proof of {\bf Claim 3}]
Since $t_g(\gv)$ is gen-rewritable by $G_{end}$ and $\lpp(t_g\v) \prec \lpp(t_f\u)$, by using a similar method in the proof of the First and Second facts in {\bf Claim 2}, we have that there exists $g_0^{[\v_0]} \in G_{end}$ such that $t_g(\gv)$ is gen-rewritable by $g_0^{[\v_0]}$, $t_0(g_0^{[\v_0]})$ is not gen-rewritable by $G_{end}$ and $\lpp(t_g g) \succeq \lpp(t_0 g_0)$ where $t_0 = \lpp(t_g\v)/\lpp(\v_0)$.

If $\lpp(t_0 g_0) = \lpp(t_g g) = \lpp(t_f f)$, then the critical pair $[\fu, g_0^{[\v_0]}]$ has two possible forms.
\smallskip\\
{Form 1:} $[\fu, g_0^{[\v_0]}] = (t_f, \fu, t_0, g_0^{[\v_0]})$. Since $\lpp(t_f\u) \succ \lpp(t_g\v) = \lpp(t_0 \v_0)$, the critical pair $[\fu, g_0^{[\v_0]}]$ is regular and is smaller than $(t_f, \fu, t_g, \gv)$ in fashion (d), which means $[\fu, g_0^{[\v_0]}]$ lies in $Done$ and $t_f(\fu)$ is gen-rewritable by $G_{end}$.
\smallskip\\
{Form 2:} $[\fu, g_0^{[\v_0]}] = (\bar{t}_f, \fu, \bar{t}_0, g_0^{[\v_0]})$ where $\bar{t}_f$ divides $t_f$ and $\bar{t}_f \not= t_f$. Since $\lpp(t_f\u) \succ \lpp(t_g\v) = \lpp(t_0 \v_0)$, the critical pair $(\bar{t}_f, \fu, \bar{t}_0, g_0^{[\v_0]})$ is also regular and is smaller than $(t_f, \fu, t_g, \gv)$ in fashion (a), which means $(\bar{t}_f, \fu, \bar{t}_0, g_0^{[\v_0]})$ lies in $Done$ and $\bar{t}_f(\fu)$ is gen-rewritable by $G_{end}$. Then $t_f(\fu)$ is also gen-rewritable by $G_{end}$, since $\bar{t}_f$ divides $t_f$.

Otherwise, $\lpp(t_g g) \succ \lpp(t_0 g_0)$ holds. Let $c := \lc(\v)/\lc(\v_0)$. For the polynomial $t_g \gv - c t_0 (g_0^{[\v_0]}) = (t_g g - c t_0 g_0)^{[t_g \v - c t_0 \v_0]}$, we have $\lpp(t_g g - c t_0 g_0) = \lpp(t_g g)$ and $\lpp(t_g \v - c t_0 \v_0) \prec \lpp(t_g \v)$. Then $(t_g g - c t_0 g_0)^{[t_g \v - c t_0 \v_0]}$ has a standard representation w.r.t. $G_{end}$ by {\bf Claim 1}, and hence, there exists $\hw\in G_{end}$ such that $\lpp(h)$ divides $\lpp(t_g g - c t_0 g_0)=\lpp(t_g g)$ and $\lpp(t_h \w) \preceq \lpp(t_g \v - c t_0 \v_0) \prec \lpp(t_g \v)$ where $t_h = \lpp(t_g g)/\lpp(h)$. Note that $\lpp(t_h h) = \lpp(t_g g) = \lpp(t_f f)$. The critical pair of $[\fu, \hw]$ also has two possible forms.
\smallskip\\
{Form 1:} $[\fu, \hw] = (t_f, \fu, t_h, \hw)$. Since $\lpp(t_f\u) \succ \lpp(t_g\v) \succ \lpp(t_h \w)$, the critical pair $[\fu, \hw]$ is regular and is smaller than $(t_f, \fu, t_g, \gv)$ in fashion (c), which means $[\fu, \hw]$ lies in $Done$ and $t_f(\fu)$ is gen-rewritable by $G_{end}$.
\smallskip\\
{Form 2:} $[\fu, \hw] = (\bar{t}_f, \fu, \bar{t}_h, \hw)$ where $\bar{t}_f$ divides $t_f$ and $\bar{t}_f \not= t_f$. Since $\lpp(t_f\u) \succ \lpp(t_g\v) \succ \lpp(t_h \w)$, the critical pair $(\bar{t}_f, \fu, \bar{t}_h, \hw)$ is also regular and is smaller than $[\fu, \gv]$ in fashion (a), which means $(\bar{t}_f, \fu, \bar{t}_h, \hw)$ lies in $Done$ and $\bar{t}_f(\fu)$ is gen-rewritable by $G_{end}$. Then $t_f(\fu)$ is also gen-rewritable by $G_{end}$, since $\bar{t}_f$ divides $t_f$.

{\bf Claim 3} is proved.
\end{proof}

\begin{remark}
The proof of Theorem \ref{thm_main} also indicates that, {all regular or super regular critical pairs of $G_{end}$ are gen-rewritable by $G_{end}$}.
\end{remark}

\section{Developing New Criteria} \label{sec_newcri}

Based on the generalized criterion, to develop new criteria for signature-based algorithms, it suffices to choose appropriate admissible partial orders for the generalized criterion.

For example, we can develop a new criterion by using the following admissible partial order implied by GVW's criteria: for any  $\fu, \gv \in G$, we say $\gv < \fu$ if one of the following two conditions holds:
\begin{enumerate}

\item  $\lpp(t'g) < \lpp(tf)$, where $t'= \frac{\lcm(\lpp(\u), \lpp(\v))}{\lpp(\v)}$ and $t= \frac{\lcm(\lpp(\u), \lpp(\v))}{\lpp(\u)}$ such that $t(\fu)$ and $t'(\gv)$ have the same signature, i.e. $\lpp(t\u) = \lpp(t'\v)$.

\item $\lpp(t'g) = \lpp(tf)$ and $\gv$ is added to $G$ later than $\fu$.

\end{enumerate}
Recently, we notice Huang also considers a similar order in \citep{Huang10}. Applying this admissible partial order in the generalized criterion of algorithm AGC, we get a new algorithm (named by NEW). This algorithm can be regarded as an improved version of GVW.

To test the efficacy of the new criterion, we implemented the algorithm NEW on Singular (version 3-1-2), and use two strategies for selecting critical pairs.
\smallskip\\
Minimal {\bf S}ignature Strategy: $(t_f, \fu, t_g, \gv)$ is selected from {\sl CPairs} only if there does {\em not} exist another critical pair $({t}_{\brf}, \brfu, {t}_{\brg}, \brgv) \in \mbox{\sl CPairs}$ such that $\lpp(t_{\brf} \bru) \prec \lpp(t_f\u)$;
\smallskip\\
Minimal {\bf D}egree Strategy: $(t_f, \fu, t_g, \gv)$ is selected from {\sl CPairs} if there does {\em not} exist another critical pair $({t}_{\brf}, \brfu, {t}_{\brg}, \brgv) \in \mbox{\sl CPairs}$ such that $\deg(\lpp(t_{\brf} \brf)) \prec \deg(\lpp(t_f f))$.
\smallskip
The proofs in Section \ref{sec_proof} ensure the algorithm NEW is correct for both strategies.

%while the GVW algorithm demands the critical pairs must be selected by the Minimal Signature Strategy.

In the following table, we use (s) and (d) to refer the two strategies respectively. The order $\prec_1$ is the Graded Reverse Lex order and $\prec_2$ is extended from $\prec_1$ in the following way: $x^\alpha\e_i \prec_2 x^\beta\e_j$, if either $\lpp(x^\alpha f_i) \prec_1 \lpp(x^\beta f_j)$, or  $\lpp(x^\alpha f_i) = \lpp(x^\beta f_j)$ and $i > j$. This order $\prec_2$ has also been used in \citep{Gao10b, SunWang10b}. The examples are selected from \citep{Gao10b} and the timings are obtained on Core i5 $4\times 2.8$ GHz with 4GB memory running Windows 7.

\begin{table}[!ht] \label{data}
\centering \caption{ $\#all.$: number of all critical pairs generated in the computation;  $\#red.$: number of critical pairs that are really reduced in the computation; $\#gen.$: number of non-zero generators in the \gr basis in the last iteration but before computing a reduced \gr basis. ``Katsura5 (22)" means there are 22 non-zero generators in the reduced \gr basis of Katsura5.}

\begin{tabular}{|c||c|c||c|c||c|c|} \hline
 & NEW(s) &  NEW(d) & NEW(s) &  NEW(d) & NEW(s) &  NEW(d)\\ \hline\hline

 & \multicolumn{2}{|c||}{Katsura5 (22)} & \multicolumn{2}{|c||}{Katsura6 (41)}  & \multicolumn{2}{|c|}{Katsura7 (74)} \\ \hline

$\#all.$ & 351 & 378 & 1035 & 1275  & 3160 & 3160\\ \hline

$\#red.$ & 39 & 40 & 73 & 78 & 121 & 121\\ \hline

$\#gen.$ & 27 & 28 & 46 & 51 & 80 & 80\\ \hline

time(sec.) & 1.400 & 1.195 & 7.865 & 5.650 & 38.750 & 29.950\\ \hline\hline

 & \multicolumn{2}{|c||}{Katsura8 (143)} & \multicolumn{2}{|c||}{Cyclic5 (20)} &  \multicolumn{2}{|c|}{Cyclic6 (45)} \\ \hline

$\#all.$ & 11325 & 11325 & 1128 & 2080 & 18528 &  299925 \\ \hline

$\#red.$ & 244 & 244 & 56 & 78 & 231 & 834 \\ \hline

$\#gen.$ & 151 & 151 & 48 & 65 & 193 & 775 \\ \hline

time(sec.) & 395.844 & 310.908 & 2.708 & 2.630 & 106.736 & 787.288 \\ \hline\hline

\hline\end{tabular}
\end{table}

From the above table, we can see that the new criterion can reject redundant critical pairs effectively. We also notice that the timings are influenced by the strategies of selecting critical pairs. For some examples, the algorithm with minimal signature strategy has better performance. The possible reason is that less critical pairs are generated by this strategy. For other examples, the algorithm with minimal degree strategy cost less time. The possible reason is that, although the algorithm with the minimal degree strategy usually generates more critical pairs, the critical pairs which are really needed to be reduced usually have lower degrees.

\section{Conclusions and Future works} \label{sec_conclusion}

Signature-based algorithms are a popular kind of algorithms for computing \gr basis. A generalized criterion for signature-based algorithms is proposed in this paper. Almost all existing criteria of signature-based algorithms can be specialized by the generalized criterion, and we show in detail how the generalized criterion specializes to F5 and GVW's criteria. We also proved that if the partial order is admissible, the generalized criterion is always correct no matter which computing order of the critical pairs is used. Since the generalized criterion can specialize to F5 and GVW's criteria, the proof in this paper also ensures the correctness of F5 and GVW for any computing order of critical pairs.

The significance of this generalized criterion is to describe which kind of criterion is correct in signature-based algorithms. Moreover, the generalized criterion also provides an effective approach to check and develop new criteria for signature-based algorithms, i.e., if a new criterion can be specialized from the generalized criterion by using an admissible partial order, it must be correct; when developing new criteria, it suffices to choose admissible partial orders in the generalized criterion. We also develop a new effective criterion in this paper. We believe that if the admissible partial order is in fact a total order, then the generalized criterion can reject almost all useless critical pairs. The proof of the claim will be included in future works.

Note that the generalized criterion is just one application of Key Fact in Section \ref{sec_mainideas}. We believe more results can be deduced from Key Fact as well. Related works will also be included in our future papers.

However, there are still some open problems.

\noindent
{\bf Problem 1:} Is the generalized criterion still correct if the partial order is not admissible? We do know some partial orders lead to wrong criteria. For example, consider the following partial order which is not admissible: for any $\fu, \gv \in G$, we say $\gv < \fu$, if $g=0$ and $f\not=0$; otherwise, $\gv$ is added to $G$ {\bf\em earlier} than $\fu$. This partial order leads to a wrong criterion. Because the polynomials $f_1^{[\e_1]}, \cdots, f_m^{[\e_m]}$ are added to $G$ earlier than others, so using this partial order, the generalized criterion will reject almost all critical pairs that are generated later, which definitely leads to a wrong output unless $\{f_1^{[\e_1]}, \cdots, f_m^{[\e_m]}\}$ itself is a labeled \gr basis.

\noindent
{\bf Problem 2:} Does the labeled \gr basis always exist for any ideal? Clearly, if the algorithm AGC terminates, then labeled \gr basis always exists. Note that GVW also computes a labeled \gr basis, and recently we learn by private communication about that Gao et al. have proved the termination of GVW, so in that sense the existence of labeled \gr basis has also been proved.

\noindent
{\bf Problem 3:} Does the algorithm AGC always terminate in finite steps? Since GVW has a special demand on the computing order of critical pairs, the proof for the termination of GVW cannot ensure the termination of the algorithm AGC. However, after testing many examples, we have not found a counterexample that AGC does not terminate.

%\smallskip
%\smallskip
%\smallskip
%\smallskip
%\smallskip
%
%
%\noindent{\bf\large Acknowledgements}\smallskip\\
%We would like to thank Shuhong Gao, Mingsheng Wang and Lei Huang for constructive discussions, and the anonymous reviewers for their helpful comments.
%

%%%%%%%%%%%%%%%%%%%%%%%%%%%%%%%%%%%%%%%%%%%%%%%6.23e%%%%%%%%%%%%%%%%%%%%%%%%%%%%%%%%%%%%%%%%%%%%%%%

%ACKNOWLEDGMENTS are optional
%\section{Acknowledgments}


\begin{thebibliography}{99}

\bibitem[Albrecht and Perry, 2010]{Albrecht10}
M. Albrecht and J. Perry. F4/5. Preprint, arXiv:1006.4933v2 [math.AC], 2010.

\bibitem[Arri and Perry, 2010]{Arri10}
A. Arri and J. Perry. The F5 criterion revised. Preprint, arXiv:1012.3664v3 [math.AC], 2010.

\bibitem[Buchberger, 1979]{Buchberger79}
B. Buchberger. A criterion for detecting unnecessary reductions in the construction of \gr basis. In Proceedings of EUROSAM'79, Lect. Notes in Comp. Sci., Springer, Berlin, vol. 72, 3-21, 1979.

\bibitem[Buchberger, 1985]{Buchberger85}
B. Buchberger. \gr-bases: an algorithmic method in polynomial ideal theory. Reidel Publishing Company, Dodrecht - Boston - Lancaster, 1985.

\bibitem[Courtois et al., 2000]{Courtois00}
N. Courtois, A. Klimov, J. Patarin, and A. Shamir. Efficient algorithms for solving overdefined systems of multivariate polynomial equations. In Proceedings of EUROCRYPT'00, Lect. Notes in Comp. Sci., Springer, Berlin, vol. 1807, 392-407, 2000.

\bibitem[Cox et al., 2004]{CLO04} D. Cox, J. Little, and D. O'Shea. Using
algebraic geometry. Springer, New York, second edition, 2005.

\bibitem[Ding et al., 2008]{Ding08}
J. Ding, J. Buchmann, M.S.E. Mohamed, W.S.A.E. Mohamed, and R.-P. Weinmann. MutantXL. In Proceedings of the 1st international conference on Symbolic Computation and Cryptography (SCC08), Beijing, China, 16-22, 2008.

\bibitem[Eder, 2008]{Eder08}
C. Eder. On the criteria of the F5 algorithm. Preprint, arXiv:0804.2033v4 [math.AC], 2008.

\bibitem[Eder and Perry, 2010]{Eder09}
C. Eder and J. Perry. F5C: a variant of Faug\`ere's F5 algorithm with reduced \gr bases. J. Symb. Comput., vol. 45(12), 1442-1458, 2010.

\bibitem[Eder and Perry, 2011]{Eder11}
C. Eder and J. Perry. Signature-based Algorithms to Compute Gr\"obner Bases. In Proceedings of ISSAC'11, ACM Press, New York, USA, 99-106, 2011.


\bibitem[Faug\`ere, 1999]{Fau99}
J.-C. Faug\`ere. A new effcient algorithm for computing Gr\"obner bases ($F_4$). J. Pure Appl. Algebra, vol. 139(1-3), 61-88, 1999.

\bibitem[Faug\`ere, 2002]{Fau02}
J.-C. Faug\`ere. A new effcient algorithm for computing Gr\"obner bases without reduction to zero ($F_5$). In Proceedings of ISSAC'02, ACM Press, New York, USA, 75-82, 2002. Revised version downloaded from fgbrs.lip6.fr/jcf/Publications/index.html.

\bibitem[Gao et al., 2010a]{Gao09}
S.H. Gao, Y.H. Guan, and F. Volny. A new incremental
algorithm for computing \gr bases. In Proceedings of ISSAC'10, ACM Press, New York, USA, 13-19, 2010.

\bibitem[Gao et al., 2010b]{Gao10b}
S.H. Gao, F. Volny, and M.S. Wang. A new algorithm for computing \gr bases. Cryptology ePrint Archive, Report 2010/641, 2010.

\bibitem[Gebauer and Moller, 1986]{GebMol86}
R. Gebauer and H.M. Moller. Buchberger's algorithm and staggered linear bases. In Proceedings of SYMSAC'86, ACM press, New York, USA, 218-221, 1986.

\bibitem[Giovini et al., 1991]{Gio91}
A. Giovini, T. Mora, G. Niesi, L. Robbiano and C. Traverso. ``One sugar cube, please" or selection strategies in the Buchberger algorithm. In Proceedings of ISSAC'91, ACM Press, New York, USA, 49-54, 1991.

\bibitem[Hashemi and Ars, 2010]{Ars09}
A. Hashemi and G. Ars. Extended F5 criteria. J. Symb. Comput., vol. 45(12), 1330-1340, 2010.

\bibitem[Huang, 2010]{Huang10}
L. Huang. A new conception for computing Gr\"obner basis and its applications. Preprint, arXiv:1012.5425v2 [cs.SC], 2010.

\bibitem[Lazard, 1983]{Lazard83}
D. Lazard. Gr\"obner bases, Gaussian elimination and resolution of systems
of algebraic equations. In Proceeding of EUROCAL'83, Lect. Notes in Comp.
Sci., Springer, Berlin, vol. 162, 146-156, 1983.

\bibitem[M\"oller et al., 1992]{Mora92}
H.M. M\"oller, T. Mora, and C. Traverso. Gr\"obner bases computation using syzygies. In Proceedings of ISSAC'92, ACM Press, New York, USA, 320-328, 1992.

\bibitem[Stegers, 2006]{Stegers05}
T. Stegers. Faug\`ere's F5 algorithm revisited. Cryptology ePrint Archive, Report
2006/404, 2006.

\bibitem[Sun and Wang, 2010a]{SunWang10a}
Y. Sun and D.K. Wang. The F5 algorithm in Buchberger's style. To appear in J. Syst. Sci. Complex., arXiv:1006.5299v2 [cs.SC], 2010.

\bibitem[Sun and Wang, 2010b]{SunWang10b}
Y. Sun and D.K. Wang. A new proof for the correctness of the F5 algorithm. Preprint, arXiv:1004.0084v4 [cs.SC], 2010.

\bibitem[Sun and Wang, 2011]{SunWang11}
Y. Sun and D.K. Wang. A Generalized Criterion for  Signature Related \gr Basis Algorithms. In Proceedings of ISSAC'11, ACM Press, New York, USA, 337-344, 2011.



\bibitem[Zobnin, 2010]{Zobnin10}
A. Zobnin. Generalization of the F5 algorithm for calculating Gr\"obner bases for polynomial ideals. Programming and Computer Software, vol. 36(2), 75-82, 2010.

\end{thebibliography}
\end{document}